\newcommand{\RR}{{\mathbb R}}
\newcommand{\NN}{{\mathbb N}}
\newcommand{\CC}{{\mathbb C}}
\newcommand{\ZZ}{{\mathbb Z}}
\newcommand{\MG}{{\mathcal G}}
\newcommand{\bfx}{{\bm{x}}}
\newcommand{\bfy}{{\bm{y}}}
\newcommand{\bfz}{{\bm{z}}}
\newcommand{\bfw}{{\bm{w}}}
\newcommand{\1}{{\mathbb{1}}}
\newcommand{\bfn}{{\textbf{n}}}
\newcommand{\Fr}{{\operatorname{Fr}}}
\DeclareMathOperator{\supp}{Supp}
\DeclareMathOperator{\Tr}{Tr}
\newtheorem{theorem}{Theorem}
\newtheorem{prop}{Proposition}
\newtheorem{lemma}{Lemma}[section]
\newtheorem{remark}{Remark}
\newtheorem{assumption}{Assumption}
\numberwithin{equation}{section}
\title{Topological edge spectrum along curved interfaces}
\author{Alexis Drouot}
\address[Alexis Drouot]{University of Washington, Seattle, USA.} 
\email{adrouot@uw.edu}
\author{Xiaowen Zhu}
\address[Xiaowen Zhu]{University of Washington, Seattle, USA.} 
\email{xiaowenz@uw.edu}
\begin{document}
\begin{abstract}
    We prove that that if the boundary of a topological insulator divides the plane in two regions containing arbitrarily large balls, then it acts as a conductor. Conversely, we show that topological insulators that fit within strips do not need to admit conducting boundary modes.
    \end{abstract}

\maketitle

%To do
%\begin{itemize}
 %   \item counter example. edit remark in open problems. State here and there that the condition is somewhat close to optimal
 % \item citations in counter-example.
%\end{itemize}

\section{Introduction and main results}

\subsection{Introduction} Topological insulators are novel materials with striking properties. They are phases of matter insulating in their bulk (the Hamiltonian has a spectral gap), but that turn into conductors when truncated to half-spaces (the spectral gap fills). The resulting edge conductance is equal to the difference of the bulk topological invariants across the cut, a principle known as the bulk-edge correspondence \cites{KS02,EG02,EGS05,ASV13,GP13,GT18,ST19,B19,D21,LT22}. Here, we consider truncations of topological insulators in regions more sophisticated than half-spaces (for example, sectors or filled parabolas). We investigate how the shape of the resulting material affects the spectrum.

\begin{figure}[b]
  \centering
  \includegraphics[width=1\textwidth]{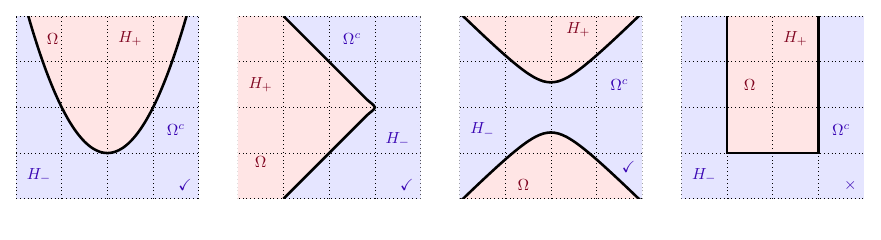}
  \caption{The first three truncations systematically give rise to edge spectrum. In the last case, we construct Haldane-type topological insulators $H_-, H_+$ such that $H_e$ is also an insulator.}
  \label{fig:1}
\end{figure}

The main operator in our analysis is approximately equal to
\begin{equation}
H_e := \begin{cases}
    H_+ \quad \text{ on } & \Omega,\\
    H_- \quad \text{ on } & \Omega^c = \ZZ^2 \setminus \Omega,
\end{cases}
\end{equation}
for two Hamiltonians $H_\pm$ on $\ell^2(\ZZ^2,\CC^d)$ with distinct bulk invariants within a joint spectral gap $\MG$ (potentially one of them representing the vacuum); and $\Omega$ a subset of $\ZZ^2$. We refer to \S\ref{sec-1.2} for precise definitions and assumptions. We ask for geometric conditions on $\Omega$ that guarantee that $H_e$ has spectrum filling $\MG$. 

Our main result, Theorem \ref{thm:1}, asserts that if both $\Omega$ and $\Omega^c$ contain arbitrarily large balls then $H_e$ has spectrum filling $\MG$ (referred to as edge spectrum). Hence $H_e$ behaves like a conductor near $\partial \Omega$. Examples of such domains $\Omega$ include half-planes, sectors, regions enclosed by hyperbolas, and so on; they exclude strips or half-strips -- see Figure \ref{fig:1}. In this last case, we actually show that there exist examples of distinct topological insulators $H_\pm$ such that $H_e$ remains an insulator -- Theorem \ref{thm:3}. Therefore, topological materials fitting in strips can violate the bulk-edge correspondence: boundaries or interfaces are not systematically conductors. This violation was suggested in a question of G.M. Graf in an online talk by G.C. Thiang \cite{GT20}.

Our experience of the world, is by nature, an approximation of the reality. Experiment samples -- here, $\Omega$ -- are always finite and spectral measurements are valid only up to an uncertainty $\delta > 0$. Hence, in practice $\Omega$ never contains arbitrarily large balls -- but neither can experiments assess with full certainty that spectral gaps completely fill: they can only measure emergence of a $\delta$-dense set of spectrum. Theorem \ref{thm:2} is a quantitative formulation of Theorem \ref{thm:1} that relates to these observations. It predicts that there exist constants $\alpha, R_0$ such that for all $R \geq R_0$, the following holds. Assume that both $\Omega, \Omega^c$ contain a ball of radius $R$. Then the spectrum of $H_e$ within $\MG$ is $C \ln(R)/R$-dense:
\begin{equation}
    \forall \lambda_* \in \MG,  \ \ \exists \lambda \in \Sigma(H_e), \ \ |\lambda-\lambda_*| \leq \dfrac{\alpha \ln(R)}{R}.
\end{equation}
This justifies why topological insulators truncated to sufficiently large balls appear in experiments conducting along their boundaries.

\subsection{Topological insulators and interface operators}\label{sec-1.2} We briefly review standard facts from condensed matter physics. Electronic propagation through a given material is described via a selfadjoint operator $H$ on a Hilbert space -- here $\ell^2(\ZZ^2,\CC^d)$. The spectrum $\Sigma(H)$ of $H$ characterizes the electronic nature of the material: $H$ is a conductor at energy $\lambda$ if and only if $\lambda \in \Sigma(H)$ and an insulator otherwise.
\textbf{\begin{center} In the rest of this paper, $\nu \in (0,1]$ is a fixed parameter.\end{center}}

We work here with short-range Hamiltonians: operators on $\ell^2(\ZZ^2,\CC^d)$ whose kernels satisfy
\begin{equation}\label{eq-1c}
   \forall \bfx,\bfy \in \ZZ^2, \qquad \big| H(\bfx,\bfy) \big| \leq \nu^{-1} e^{-2\nu |\bfx-\bfy|}, \qquad |\bfx-\bfy| := |x_1-y_1|+|x_2-y_2|. 
\end{equation}
Under \eqref{eq-1c}, one can define the bulk conductance of $H$ at an insulating energy. For $\lambda \notin \Sigma(H)$, let $P_\lambda(H) = \1_{(-\infty,\lambda)}(H)$ be the spectral projector below energy $\lambda$, and $\Lambda_1$ (respectively $\Lambda_2$) the indicator function of $\NN \times \ZZ$ (respectively $\ZZ \times \NN$). Then the operator $P_\lambda(H) \big[[P_\lambda(H),\Lambda_1],[P_\lambda(H),\Lambda_2]\big]$ is trace-class (see \cite{EGS05} and Remark \ref{rmk: interger_conductance} below) and 
\begin{equation}
    \sigma(H,\lambda) := -2\pi i \Tr \left(P_\lambda(H) \big[[P_\lambda(H),\Lambda_1],[P_\lambda(H),\Lambda_2]\big]\right)
\end{equation}
is well-defined. We comment that if $\MG$ is a subinterval of $\Sigma(H)^c$ (referred to below as a spectral gap), then 
\begin{equation}
    \lambda, \lambda' \in \MG \ \Rightarrow \ \sigma(H,\lambda) = \sigma(H,\lambda').
\end{equation}
Therefore, there is no ambiguity in using the notation $\sigma(H,\MG)$ for $\sigma(H,\lambda)$, $\lambda \in \MG$.  It represents the bulk conductance  for energies in $\MG$ \cite{ASS94}. Under the gap condition $\lambda \notin \Sigma(H)$, $\sigma(H,\lambda)$ is an integer that measures topological aspects of the Hamiltonian $H$.
%-- see \cite{EGS05}*{Proposition 3} and Remark \ref{rmk: interger_conductance} for the version needed here; and \cites{TKNN,X88,ASS94} in other contexts. 

In this work, we ask under which conditions interfaces between two topologically distinct insulators -- the bulk materials -- carry currents. We make the following assumption on the bulk components:

\begin{assumption}\label{a1} $H_\pm$ are two selfadjoint, short-range Hamiltonians on $\ell^2(\ZZ^2,\CC^d)$, with a common spectral gap $\MG$ (an interval contained in $\Sigma(H_+)^c \cap \Sigma(H_-)^c$) and distinct bulk conductances within~$\MG$:
\begin{equation}
    \sigma(H_+,\MG) \neq \sigma(H_-,\MG).
\end{equation}
\end{assumption}

Given a domain $\Omega \in \ZZ^2$, we define its boundary by 
\begin{equation}
    \partial \Omega := \{ \bfx \in \Omega, \ B_1(\bfx) \not\subset \Omega \} \cup \{ \bfx \in \Omega^c, \ B_1(\bfx) \not\subset \Omega^c \},
\end{equation}
where $B_r(\bfx)$ denotes the $\ell^1$-ball of radius $r$ centered at $\bfx$ in $\ZZ^2$. We will denote the distance from $\bfx$ to $\partial \Omega$ by $d(\bfx,\partial \Omega)$. We make the following assumptions on the interface operator:

\begin{assumption}\label{a2} $H_e$ is a selfadjoint, short-range Hamiltonian on $\ell^2(\ZZ^2,\CC^d)$ satisfying the kernel condition: \begin{equation}\label{eq-1a}
\forall \bfx, \bfy \in \ZZ^2, \qquad
    \big| E(\bfx,\bfy) \big| \leq \nu^{-1} e^{-2\nu d(\bfx,\partial \Omega)}, \qquad E :=  H_e - \1_\Omega H_+ \1_\Omega - \1_{\Omega^c} H_- \1_{\Omega^c}.
\end{equation}
\end{assumption}

The condition \eqref{eq-1a} means that $H_e$ is equal to $H_+$ on $\Omega$ and $H_-$ on $\Omega^c$, up to corrections decaying exponentially away from $\partial \Omega$.

\subsection{Main results}\label{sec-1.3} To formulate our main results, we will need the notion of filling radius for a subset $\Omega$ of $\ZZ^2$:\
\begin{equation}
    \Fr(\Omega) = \sup \{ r : \ \exists \ \bfx \in \ZZ^2, \ B_r(\bfx) \cap \ZZ^2 \subset \Omega \}.
\end{equation}
 It measures the size of the largest ball contained in $\Omega$: $\Fr(\Omega) \geq r$ if and only if $\Omega$ contains a ball of radius $r$. 

\begin{theorem}\label{thm:1} Let $H_\pm, H_e$ satisfying Assumptions \ref{a1} and \ref{a2}. If $\Fr(\Omega) = \Fr(\Omega^c) = \infty$, then $\MG \subset \Sigma(H_e)$.  
\end{theorem}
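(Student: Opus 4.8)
The plan is to argue by contradiction, via a locality and base-point-independence property of the bulk conductance. Suppose some $\lambda_* \in \MG$ does not belong to $\Sigma(H_e)$; then $H_e$ has a spectral gap at $\lambda_*$, and since $H_e$ is short-range, a Combes--Thomas estimate shows that the projector $P := P_{\lambda_*}(H_e)$ has an exponentially decaying kernel. In particular $\sigma(H_e,\lambda_*)$ is a well-defined integer. The goal is to establish the two identities $\sigma(H_e,\lambda_*) = \sigma(H_+,\MG)$ and $\sigma(H_e,\lambda_*) = \sigma(H_-,\MG)$, which together contradict Assumption \ref{a1}.

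For $\bfx_0 \in \ZZ^2$ and $j \in \{1,2\}$, let $\Lambda_j^{\bfx_0}$ be the switch function obtained by translating $\Lambda_j$ so that the boundary of its support passes through $\bfx_0$, and set
\begin{equation}
    c(\bfx_0; P) := -2\pi i \, \Tr\left(P\big[[P,\Lambda_1^{\bfx_0}],[P,\Lambda_2^{\bfx_0}]\big]\right),
\end{equation}
so that $c(\bm{0}; P) = \sigma(H_e,\lambda_*)$. Two lemmas drive the argument. First, \emph{base-point independence}: for any gapped short-range Hamiltonian $H$ with projector $P = P_\lambda(H)$, the integer $c(\bfx_0; P)$ is independent of $\bfx_0$ and equals $\sigma(H,\lambda)$. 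Second, \emph{locality}: if $P, P'$ are spectral projectors of two such Hamiltonians, with kernels decaying exponentially at a common rate, and if their kernels agree within $B_R(\bfx_0)$ up to an error of size $e^{-cR}$, then $|c(\bfx_0;P) - c(\bfx_0;P')| \leq C e^{-cR}$ for constants $c, C > 0$.

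Granting these, fix $R$ large and, using $\Fr(\Omega) = \infty$, choose a ball $B_R(\bfx_0^+) \subset \Omega$; deep inside it one has $d(\cdot,\partial\Omega) \gtrsim R$, so Assumption \ref{a2} forces $H_e$ to agree with $H_+$ there up to an exponentially small correction. By Combes--Thomas the projectors $P$ and $P_+ := P_{\lambda_*}(H_+)$ then agree within $B_{R/2}(\bfx_0^+)$ up to $e^{-cR}$. Since $\lambda_* \in \MG$ gives $\sigma(H_+,\MG) = \sigma(H_+,\lambda_*)$, the two lemmas combine into
\begin{equation}
    \big| \sigma(H_e,\lambda_*) - \sigma(H_+,\MG) \big| = \big| c(\bfx_0^+; P) - c(\bfx_0^+; P_+) \big| \leq C e^{-cR}.
\end{equation}
As both sides of the outer equality are integers and $R$ is arbitrary, $\sigma(H_e,\lambda_*) = \sigma(H_+,\MG)$. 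Running the identical argument with a ball $B_R(\bfx_0^-) \subset \Omega^c$, where $H_e$ agrees with $H_-$, gives $\sigma(H_e,\lambda_*) = \sigma(H_-,\MG)$, contradicting Assumption \ref{a1}. Hence no such $\lambda_*$ exists and $\MG \subset \Sigma(H_e)$.

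The main obstacle is the base-point-independence lemma, since the operators $P$ and $\Lambda_j^{\bfx_0}$ are individually not trace-class, and the invariance of $c(\bfx_0;P)$ under translating the switches must be extracted from the structure of the commutators $[P,\Lambda_j^{\bfx_0}]$. I expect to obtain it by differentiating $c$ in the switch location: the derivative is the trace of a commutator of operators that the exponential decay of the kernel of $P$ renders trace-class, so it vanishes by cyclicity, and this same localization is what justifies every manipulation (compare \cite{EGS05,ASS94}). The locality lemma is comparatively routine, as the integrand defining $c(\bfx_0;P)$ is concentrated, up to $e^{-cR}$ tails, within $B_R(\bfx_0)$, so replacing $P$ by $P'$ there alters the value by at most $e^{-cR}$.
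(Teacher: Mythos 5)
Your proposal is correct and follows essentially the same route as the paper: your base-point-independence lemma is exactly the paper's Proposition \ref{prop:1} (quoted from \cite{EGS05}), your locality lemma is the content of the paper's Lemma \ref{lem:1b} and Proposition \ref{prop:3} (assembled in Proposition \ref{prop:2}), and the contradiction via deep balls in $\Omega$ and $\Omega^c$ plus integrality of the conductance is identical. The only difference is packaging: the paper proves a quantitative version (Theorem \ref{thm:2}) and deduces Theorem \ref{thm:1} from it, whereas you argue the qualitative statement directly.
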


This means that if the boundary of a topological insulators divides the plane in two regions of infinite filling radius, then it is a conductor. Theorem \ref{thm:1} will actually follow from a more quantitative statement. Given $\delta > 0$, we say that a set  $S \subset \MG$ is $\delta$-dense within $\MG$ if either $|\MG| < \delta$
or if 
\begin{equation}
    \forall \lambda_* \in \MG, \ \exists \lambda \in S \text{ s.t. } |\lambda_*-\lambda| \leq \delta.
\end{equation}

\begin{theorem}\label{thm:2} There exist constants $\alpha_\nu > 0$ and $R_\nu > 0$, depending on $\nu$ only, such that the following holds. For $H_\pm, H_e$ satisfying Assumptions \ref{a1} and \ref{a2} and any $R \geq R_\nu$:
\begin{center}
    $\Fr(\Omega), \ \Fr(\Omega^c) \geq R \quad \Rightarrow \quad \Sigma(H_e) \cap \MG \ \ $  is \  $\alpha_\nu\frac{\ln R}{R}$--dense within $\MG$.
\end{center}
\end{theorem}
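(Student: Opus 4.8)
The plan is to argue by contradiction, converting the failure of density into a spectral gap for $H_e$ and then showing that such a gap forces $\sigma(H_+,\MG) = \sigma(H_-,\MG)$, contradicting Assumption \ref{a1}. Set $\delta = \alpha_\nu \ln R / R$ with $\alpha_\nu$ large, to be fixed at the end. We may assume $|\MG| \geq \delta$, since otherwise density holds vacuously. If $\Sigma(H_e)\cap\MG$ is not $\delta$-dense, there is $\lambda_* \in \MG$ with $\operatorname{dist}(\lambda_*,\Sigma(H_e)) \geq \delta$; after restricting attention to the central portion of $\MG$ (the complementary case where $|\MG|$ is itself of order $\delta$ being handled directly), we may also assume $\operatorname{dist}(\lambda_*,\Sigma(H_\pm)) \gtrsim \delta$. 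Thus $H_e, H_+, H_-$ all have a spectral gap of width $\gtrsim \delta$ at $\lambda_*$, and the projectors $P_* = \1_{(-\infty,\lambda_*)}(H_e)$ and $P_\pm = \1_{(-\infty,\lambda_*)}(H_\pm)$ are well defined.

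Two locality ingredients drive the argument. First, applying a Combes--Thomas estimate to the resolvents $(z-H)^{-1}$ along a contour encircling the spectrum below $\lambda_*$ at distance $\sim\delta$, together with the short-range bound \eqref{eq-1c}, each projector satisfies $|P(\bfx,\bfy)| \lesssim \delta^{-1} e^{-c\delta|\bfx-\bfy|}$ for a constant $c$ depending only on $\nu$. The relevant decay length is therefore $\ell := (c\delta)^{-1} = R/(c\alpha_\nu \ln R)$, which is smaller than $R$ once $\alpha_\nu$ is large. Second, writing $\Lambda_1^{\bfa}, \Lambda_2^{\bfa}$ for the indicators of the two half-planes meeting at a corner $\bfa$, the commutators $[P,\Lambda_j^{\bfa}]$ concentrate within distance $\ell$ of the coordinate lines through $\bfa$, so that $P\,[[P,\Lambda_1^{\bfa}],[P,\Lambda_2^{\bfa}]]$ is trace class with trace concentrated in $B_{C\ell}(\bfa)$ and tail $\lesssim \operatorname{poly}(\ell)\,e^{-c\delta \rho}$ beyond radius $\rho$. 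This localization also yields the standard fact that the trace $\sigma_{\bfa}(H,\lambda_*):=-2\pi i\,\Tr\!\big(P[[P,\Lambda_1^{\bfa}],[P,\Lambda_2^{\bfa}]]\big)$ is independent of $\bfa$, equalling the integer $\sigma(H,\MG)$ for $H\in\{H_+,H_-\}$, and is independent of $\bfa$ for $H_e$ as well.

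Now I compare conductances locally. Pick $\bfx$ with $B_R(\bfx)\subset\Omega$ and $\bfy$ with $B_R(\bfy)\subset\Omega^c$, and evaluate with corners at $\bfx$ and $\bfy$. On $B_R(\bfx)$ we have $d(\cdot,\partial\Omega)\gtrsim R$, so by Assumption \ref{a2} and the short-range property of $H_+$ the difference $H_e - H_+$ is $O(e^{-2\nu R})$ there and only $O(1)$ at distance $\gtrsim R$ from $\bfx$. Inserting this into $P_* - P_+ = \tfrac{1}{2\pi i}\oint (z-H_e)^{-1}(H_e-H_+)(z-H_+)^{-1}\,dz$ and using Combes--Thomas decay of both resolvents over the distance $\gtrsim R-\ell$ to $\partial\Omega$ gives $|(P_*-P_+)(\bfz,\bfw)| \lesssim \delta^{-2} e^{-c\delta R}$ for $\bfz,\bfw \in B_{C\ell}(\bfx)$. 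Expanding $\sigma_{\bfx}(H_e,\lambda_*)-\sigma_{\bfx}(H_+,\lambda_*)$ as a sum of trace terms, each carrying a factor $P_*-P_+$ and supported on $O(\ell^2)$ sites, yields
\begin{equation}
    \big|\sigma_{\bfx}(H_e,\lambda_*) - \sigma(H_+,\MG)\big| \lesssim \operatorname{poly}(R)\, e^{-c\delta R} = \operatorname{poly}(R)\, R^{-c\alpha_\nu},
\end{equation}
and symmetrically for $\bfy$ and $H_-$. Since $\sigma_{\bfx}(H_e,\lambda_*) = \sigma_{\bfy}(H_e,\lambda_*)$ by corner-independence, the triangle inequality gives $|\sigma(H_+,\MG)-\sigma(H_-,\MG)| \lesssim \operatorname{poly}(R)\, R^{-c\alpha_\nu}$. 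Choosing $\alpha_\nu$ so large that the right-hand side is $<1$ for all $R\geq R_\nu$ contradicts the integrality and distinctness in Assumption \ref{a1}, which proves the density claim.

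The step I expect to be most delicate is the quantitative local comparison: controlling $P_*-P_+$ near $\bfx$ through the resolvent identity with a decay rate proportional to the small gap $\delta$, and then propagating this bound through the trace-norm estimates for the commutator products while keeping every polynomial prefactor explicit in $R$. The entire argument rests on the competition between the exponential gain $e^{-c\delta R}=R^{-c\alpha_\nu}$ and the polynomial losses $\operatorname{poly}(R)$ arising because the decay length $\ell\sim 1/\delta$ is polynomially large in $R$; it is exactly this balance that pins the density scale at $\delta\sim\ln R/R$ rather than $1/R$, the logarithm being precisely what is needed for the exponential to overcome the polynomial prefactors.
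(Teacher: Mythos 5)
Your proposal is correct and follows essentially the same route as the paper: the paper packages your local-comparison step as Proposition \ref{prop:2} (two gapped Hamiltonians whose kernels nearly coincide on a large ball have nearly equal conductances, proved via Combes--Thomas decay of projectors, the resolvent identity for $P_1-P_2$, corner/translation independence of the conductance from \cite{EGS05}, and trilinear trace estimates with one small factor), and then runs exactly your contradiction argument with balls deep in $\Omega$ and $\Omega^c$, integrality of the conductances, and the same $\ln R/R$ balance between the exponential gain $e^{-c\delta R}$ and the polynomial prefactors. The only differences are organizational: the paper translates the Hamiltonians rather than moving the corner point, and it handles $\lambda_*$ near the endpoints of $\MG$ with an explicit $3\delta$-density argument, a step your sketch passes over quickly when "restricting to the central portion."
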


Theorem \ref{thm:2} has the following physical interpretation. Assume that $H_e$ represents the truncation of a topological insulator $H_+$ in the ball $B_R(0)$, and that we have a measurement procedure that can infer if an energy is within $\delta$ of $\Sigma(H_e)$. Theorem \ref{thm:2} asserts that if $\ln(R)/R \ll \delta$, then experiments measure that the spectral gap of $H_+$ closes when truncating it to  $B_R(0)$. 
This imperfect conclusion ($H_e$ actually has discrete spectrum when truncated to $B_R(0)$) is due to the limitation of the measuring procedure. %, but it ends up modeling the reality quite well.

Theorem \ref{thm:2} implies Theorem \ref{thm:1}: if $\Fr(\Omega) = \Fr(\Omega^c) = \infty$, then $\Fr(\Omega)$ and $\Fr(\Omega^c)$ are larger than $R$ for any $R$, so $\Sigma(H_e) \cap \MG$ is $\delta$-dense within $\MG$ for any $\delta > 0$. This means that $\Sigma(H_e) \cap \MG$ is actually dense within $\MG$; since it is a closed subset of $\MG$ we conclude that $\Sigma(H_e) \cap \MG = \MG$, equivalently $\MG \subset \Sigma(H_e)$. 

A natural question is whether the conclusion of Theorem \ref{thm:1} fails for unbounded sets $\Omega$ with finite filling radius. 

\begin{theorem}\label{thm:3} Fix $L > 0$ and $\Omega \subset [-L,L] \times \ZZ$. There exists operators $H_\pm$ satisfying Assumption \ref{a1} for a gap $\MG$ containing $0$; and $H_e$ satisfying Assumption \ref{a2} such that $0 \notin \Sigma(H_e)$.     
\end{theorem}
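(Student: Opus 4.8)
The plan is to exhibit, for the given $L$ and $\Omega$, a pair $H_\pm$ together with an interface operator of the decoupled form $H_e = \1_\Omega H_+ \1_\Omega + \1_{\Omega^c} H_- \1_{\Omega^c} + E$, where $E$ is a self-adjoint short-range correction supported near $\partial \Omega$. With this choice the operator ``$E$'' appearing in Assumption \ref{a2} is exactly our $E$, so Assumption \ref{a2} holds as soon as $E$ obeys the decay bound \eqref{eq-1a}. I would take $H_-$ to be a trivial atomic (on-site) insulator with a gap around $0$ and $\sigma(H_-,\MG)=0$; since $H_-$ has no hopping, its compression $\1_{\Omega^c} H_- \1_{\Omega^c}$ keeps the same gap automatically. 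For $H_+$ I would take an explicit Haldane-type Chern insulator with $\sigma(H_+,\MG)=1$ and a gap $\MG \ni 0$, so that Assumption \ref{a1} holds. When $E=0$ the operator block-diagonalizes along $\ell^2(\ZZ^2)=\ell^2(\Omega)\oplus\ell^2(\Omega^c)$, so $0 \notin \Sigma(H_e)$ reduces to showing that the strip restriction $\1_\Omega H_+ \1_\Omega$ is gapped at $0$. The key geometric input is that $\Fr(\Omega) \le L$, hence every site of $\Omega$ lies within $L+1$ of $\partial \Omega$: the restriction is ``all edge and no bulk.''

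The mechanism is two-edge hybridization. A strip restriction of a Chern insulator carries two families of in-gap edge states localized along the two walls of $\Omega$, of opposite chirality; their edge indices are $\sigma(H_+,\MG)-\sigma(H_-,\MG)=+1$ on one wall and its negative on the other, so the net chirality vanishes and the in-gap modes are not topologically protected. In the clean case $\Omega = \{|x_1|\le L\}\times\ZZ$ this can be made quantitative by a partial Fourier transform in $x_2$: the restriction becomes a direct integral of finite $x_1$-chains $\widehat H_+(k_2)$, and I would show that each fiber is gapped at $0$ by proving that the two edge eigenvalues which would cross at some $k_2=k_*$ instead anticross, with a splitting bounded below by $c\,e^{-cL}$ uniformly in $k_2$. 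This yields the gap, with an explicit exponentially small lower bound, for all product strips.

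For a general $\Omega \subset [-L,L]\times\ZZ$ the partial Fourier transform is unavailable, and the decoupled operator need not be gapped at $0$: compression does not preserve spectral gaps, and a degenerate slice such as a single column reduces to a one-dimensional chain whose gap is not topologically protected and can close for some choices of $H_+$. Here I would switch on $E$. The admissible decay $|E(\bfx,\bfy)| \le \nu^{-1} e^{-2\nu d(\bfx,\partial\Omega)}$ permits an $O(1)$ perturbation right at $\partial \Omega$ and, because every site of $\Omega$ is within $L+1$ of the boundary, a cross-strip coupling of strength $\gtrsim e^{-2\nu(L+1)}$ everywhere on $\Omega$ --- precisely what is needed to hybridize the two counter-propagating edge channels. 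The plan is to construct such an $E$ pairing the left- and right-wall modes and opening a gap of size $\gtrsim e^{-cL}$, using the vanishing net chirality to guarantee the absence of an obstruction. I expect this last step to be the main difficulty: one must turn the soft statement ``zero net edge index'' into a quantitative spectral gap that is uniform over all thin $\Omega$, including shapes with corners and finite edge segments where no translation invariance is available. The likeliest route is a Combes--Thomas resolvent expansion off $\MG$ to localize the in-gap eigenstates within $O(L)$ of $\partial \Omega$, a chirality count pairing them across the strip, and a perturbative gap-opening estimate controlling the error of the expansion against the $e^{-cL}$ splitting.
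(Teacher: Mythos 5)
Your reduction of the decoupled case ($E=0$) to showing that $\1_\Omega H_+ \1_\Omega$ is gapped at $0$ is sound, but the mechanism you invoke to produce that gap --- inter-edge hybridization forcing an anticrossing of size $\gtrsim e^{-cL}$ --- is not justified, and it is not even generically true. For a product strip, fiber the operator by the momentum $k$ along the strip: the left- and right-wall edge branches $\lambda_L(k),\lambda_R(k)$ cross energy $0$ at momenta $k_L, k_R$ which in general differ (for a Haldane-type model, the symmetry exchanging the two walls sends $k\mapsto -k$, so $k_R=-k_L$, and they coincide only at the special momenta $0,\pi$). When $k_L\neq k_R$, exponentially weak tunneling does \emph{not} open a gap at $0$: near $k_L$ the left branch is repelled only at second order, by $O\bigl(e^{-2cL}/|\lambda_R(k_L)|\bigr)$, and since that branch still connects the valence band to the conduction band it must pass through $0$ by the intermediate value theorem. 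A zero-energy anticrossing requires (i) the two branches to be degenerate at the same momentum and (ii) a nonvanishing tunneling matrix element there; you verify neither, and (i) fails without extra symmetry or tuning. The same difficulty reappears, worse, in your step for general $\Omega$: producing a self-adjoint, boundary-localized $E$ that pairs the two channels and opens a gap uniformly over all thin $\Omega$ (corners, finite segments, single columns) is exactly the content of the theorem, and ``zero net chirality'' is a no-obstruction statement, not a construction. As you yourself note, this main step is left open, so the proposal is a plan rather than a proof.

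It is instructive to contrast this with the paper's argument (Proposition \ref{prop: He_invertible}), which avoids edge-state analysis entirely by exploiting the freedom to choose $H_\pm$ \emph{after} $L$ is fixed. Both bulk operators are taken of Haldane type, $H_\pm=H_0\pm S$ with second-neighbor coupling $s$, so that $\sigma(H_\pm,0)=\mp 1$ while $H_+-H_-=2S$ has norm $O(s)$: the two topologically distinct phases differ by an arbitrarily small perturbation, at the price of a common gap of size only $O(s)$. Then $H_e$ equals one Haldane operator everywhere except on the strip-contained set, where $2S$ is switched, and $0\notin\Sigma(H_e)$ follows from a Neumann series for $H_+^{-1}H_e=\operatorname{Id}-2H_+^{-1}\1 S\1$. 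The naive bound $\|H_+^{-1}\|\,\|S\|=O(1)$ is insufficient; the key input is the uncertainty-principle estimate of Lemma \ref{lemma: H_+_inv_lower_bound}: a function supported in a width-$L$ strip cannot concentrate its Fourier mass near the two Dirac momenta where $\hat H_+^{-1}$ has size $s^{-1}$, giving $\|H_+^{-1}u\|\leq C_0L^{1/3}s^{-2/3}\|u\|$ and hence an operator norm $O\bigl((Ls)^{1/3}\bigr)<1$ once $s\lesssim L^{-1}$. In short, the paper trades a small gap for a perturbation that is small in norm and supported in the strip, so no exponentially small level splitting ever needs to be established --- which is precisely the statement your approach cannot supply.
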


In other words, a topological insulator fitting in a strip does not have to be a conductor. This constitutes a violation of the bulk-edge correspondence: an interface between two distinct topological phases does not have to support edge states, for instance if the interface is the boundary of a half-strip.

\begin{figure}
    \centering
    \begin{minipage}{0.4\textwidth}
        \includegraphics[width=0.8\linewidth]{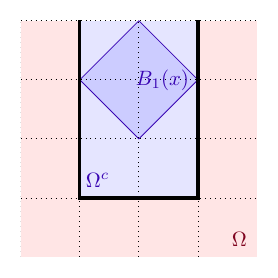}
    \end{minipage}
    \begin{minipage}{0.55\textwidth}
        \centering
        \caption{A region $\Omega$ that does not satisfy the assumption of Theorem \ref{thm:1}: the largest ball that fits in $\Omega^c$ has radius $1$, so $\Fr(\Omega^c) = 1 < \infty$. There actually exist operators $H_\pm, H_e$ satisfying Assumptions \ref{a1}, \ref{a2} such that $\MG \not \subset \Sigma(H_e)$.}
    \end{minipage}
\end{figure}

\subsection{Sketch of proof.} We explain here the main ideas leading to Theorem \ref{thm:1}. Strictly speaking, the paper will focus on quantitative forms of these ideas to obtain Theorem \ref{thm:2}, which (as explained above) implies Theorem \ref{thm:1}.

Let $H$ be a selfadjoint, short-range Hamiltonian on $\ell^2(\ZZ^2,\CC^d)$ and $\lambda \notin \Sigma(H)$. Our main argument is the observation that $\sigma(H,\lambda)$ can be computed from the sole knowledge of $H$ within a ball centered at any point $\bfn$ of $\ZZ^2$, as long as its radius $R$ is sufficiently large (depending on $\nu$, but independent of $\bfn$). While this may seem paradoxical, the global condition $\lambda \notin \Sigma(H)$ actually ensures that the result of the computation is independent of $\bfn$.

We proceed now by contradiction. Let $H_\pm, H_e$ satisfy the assumptions of Theorem \ref{thm:1}, and assume that $\lambda \in \MG \setminus \Sigma(H_e)$. We can then define $\sigma(H_e,\lambda)$, and, thanks to the above observations, compute it from sole knowledge of $H_e$ on balls of the form $B_R(\bfn)$. 

Pick now $\bfn$ so that $B_R(\bfn)$ lies deep in $\Omega$, i.e. in $\Omega$ and far from $\partial \Omega$; this is possible because under $\Fr(\Omega) = \infty$, $\Omega$ contains arbitrarily large balls. From Assumption \ref{a2}, $H_e$ is roughly $H_+$ there and we deduce $\sigma(H_e,\lambda) = \sigma(H_+,\lambda)$. Likewise, pick $\bfn$ so that $B_R(\bfn)$ lies deep in $\Omega^c$ and deduce $\sigma(H_e,\lambda) = \sigma(H_-,\lambda)$. This contradicts the assumption $\sigma(H_+,\lambda) \neq \sigma(H_-,\lambda)$. Therefore, $\lambda \in \Sigma(H_e)$.

We comment that this proof also applies to materials made off three or more topological insulators, with at least two of them with different bulk invariant filling regions with infinite filling radius. 

\subsection{Relation to existing results} The question of how the shape of the truncation affects the edge spectrum has considered before. The bulk-edge correspondence predicts the emergence of edge spectrum for half-space truncations: it gives the resulting interface conductance as a difference of Chern numbers \cites{KS02,EG02,EGS05,ASV13,GP13,B19,D21,LT22}.

In \cite{FGW00}, the authors focus on truncated quantum Hall Hamiltonians and derive a global analytic condition on $\Omega$ for the emergence of edge spectrum. They verify that this condition holds for regions with asymptotically flat boundary. This includes local perturbation of sectors. 

More recently the techniques have drifted to coarse geometry and K-theory. In \cites{T20,O22} the authors prove that magnetic Hamiltonians truncated to corners or sectors, and their local perturbations, have edge spectrum. The furthest-reaching work is due to Ludewig--Thiang \cite{LT22}. It predicts that truncations of general topological insulators to domains coarsely equivalent to half-planes admit edge spectrum. %Coarse equivalence is a global metric condition that can be  hard to visually check.

Our work derives a simple visual criterion for emergence of edge spectrum: both $\Omega$ and $\Omega^c$ contain arbitrary large balls (equivalently, $\Fr(\Omega) = \Fr(\Omega^c) = \infty$). It is not evident how our condition relates to those of \cites{FGW00,LT22}. Our proof returns to an analytic approach, which has the benefit to come up with a quantitative form of the result. This version explains  in what sense experimentalists observe edge spectrum in bounded samples. %To the best of our knowledge, this is the first time such a result is provided.

The shape of edge states matters in technological applications: they are the vectors of conduction along the edge. When the boundary is weakly curved -- which correspond to the adiabatic or semiclassical regime -- several works constructed edge states as wavepackets \cites{bal2021edge,bal2022semiclassical,D22,PPY22}. The assumptions in the present work are significantly weaker -- we only assume existence of a spectral gap -- but the result is also significantly weaker: we only prove existence of edge spectrum.

\subsection{Open problems} An open problem is whether the bulk-edge correspondence generalizes to truncations to domains $\Omega$ satisfying $\Fr(\Omega) = \Fr(\Omega^c) = \infty$. We believe that this condition will need to be strengthened to something more quantitative for the bulk-edge correspondence to hold. There are already results that use the K-theoretic and coarse geometry framework \cite{LT22}; it would be nice to provide analytic proofs. 

It has been shown that the bulk-edge correspondence holds when the gap condition ($\MG \cap \Sigma(H_\pm)$ is empty)  is replaced by a mobility gap condition ($H_\pm$ exhibits dynamical localization within $\MG$); see \cite{EGS05}. At this point we do not know if relaxing Assumption \ref{a1} to a mobility gap gives rise to edge spectrum.

\iffalse
Our result states that $\Fr(\Omega) = \Fr(\Omega^c) = \infty$ is sufficient for systematic emergence of edge spectrum. While it is probably not necessary, we believe that it is difficult to relax. Let us specifically state a conjecture on a necessary condition. Define the width of $\Omega$ as:
\begin{equation}
    W(\Omega) = \inf \{ R: \ \exists O \in O(2), a \in \RR \text{ s.t. } O\Omega \subset [a-R,a+R] \times \RR\}.
\end{equation}
We believe that if $W(\Omega) < \infty$ or $W(\Omega^c) < \infty$) then there exists Hamiltonians $H_\pm, H_e$ satisfying Assumptions \ref{a1} and \ref{a2}, but such that $\MG \not\subset \Sigma(H_e)$. In other words, we conjecture that $W(\Omega) = W(\Omega^c) = \infty$ is a necessary condition for the existence of edge spectrum. Even if this conjecture proves valid, there are sets such that $W(\Omega) = W(\Omega^c) = \infty$ but $R(\Omega^c) < \infty$, i.e. that remains inconclusive when it comes to edge spectrum. 
\fi

\subsection{Notations} We will use the following notations:
\begin{itemize}
    \item $\bfx = (x_1, x_2)$ denotes an element of $\ZZ^2$.
    \item $|\bfx| = |x_1| + |x_2|$ denotes  the $\ell^1$-norm on $\ZZ^2$.
    \item $B_r(\bfx):= \{\bfy\in \ZZ^2: |\bfy - \bfx| \leq r\}$ is the ball of radius $r\in \RR^+$ centered at $\bfx\in \ZZ^2$.
    \item If $A \subset \RR^2$ and $x \in \RR^2$, $d(x,A)$ denotes the distance from $x$ to $A$.
    \item Given an operator $H: \ell^2(\ZZ^2)\to \ell^2(\ZZ^2)$, we let $H(\bfx,\bfy)= \langle H\delta_\bfx,\delta_\bfy\rangle$  be the kernel of $H$. We let $P_\lambda(H):= \mathbb{1}_{(-\infty, \lambda)}(A)$ be the spectral projection below energy $\lambda$.
    \item In the whole paper, $C_\nu$ denotes a constant that can vary from line to line but depends only on the parameter $\nu$ from \S\ref{sec-1.2}.
\end{itemize}

\subsection{Acknowledgements} This problem was motivated in part by a question from G.M. Graf at a lecture by G.C. Thiang \cite{GT20} during the online workshop ``Mathematics of topological insulators" in 2020 at the American Institute of Mathematics. We are very grateful to the staff at AIM and the organizers of the workshop, D. Freed, G.M. Graf, R. Mazzeo and M.I. Weinstein. 

We gratefully acknowledge support from National Science Foundation DMS 2054589 and the Pacific Institute for the Mathematical Sciences. The contents of this work are solely the responsibility of the authors and do not necessarily represent the official views of PIMS.

\section{The main proposition}

We proved Theorem \ref{thm:1} using Theorem \ref{thm:2} in \S\ref{sec-1.3}. 
Theorem \ref{thm:2} will essentially follow from Proposition \ref{prop:2} below. It essentially asserts that two insulators that coincide on a large enough ball (with radius depending on $\nu$ but not on the center of the ball) must have the same bulk conductance.

\begin{assumption}\label{a4}
    $H$ is a selfadjoint, short-range operator on $\ell^2(\ZZ^2;\CC^d)$, such that for some $\lambda \in \RR$ and $\delta \in (0,1)$,
\begin{equation}
    (\lambda-\delta,\lambda+\delta) \cap \Sigma(H) = \emptyset.
\end{equation}
\end{assumption}

 \begin{prop}\label{prop:2} There exists a constant $C_\nu \geq 1$ such that the following holds. Let $\epsilon > 0$, $r > 0$, $\bfn \in \ZZ^2$, and $H_1$, $H_2$ satisfy Assumption \ref{a4} such that 
    \begin{equation}
        \label{eq: H1_H2_coincide}
    |(H_1 - H_2)(\bfx,\bfy)| \leq \epsilon, \quad \bfx, \bfy \in B_{4r}(\bfn).
    \end{equation}
    Then we have: 
    \begin{equation}
        \label{eq: sigma_B_locally_determined}
    |\sigma(H_1,\lambda) - \sigma(H_2,\lambda)|\leq \frac{C_\nu}{\delta^{12}} \left(e^{-\frac{\delta r}{2C_\nu}} + \epsilon^{1/2}\right).
    \end{equation}
 \end{prop}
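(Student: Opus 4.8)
The plan is to exploit that $\sigma(H,\lambda)$, though defined globally, is \emph{locally computable} and \emph{translation invariant}: up to an exponentially small error it is determined by the spectral projector $P_\lambda(H)$ on \emph{any} sufficiently large ball, wherever centered. Since $H_1$ and $H_2$ agree up to $\epsilon$ on $B_{4r}(\bfn)$, I will show that $P_j := P_\lambda(H_j)$ nearly agree on a concentric smaller ball, whence the two conductances are close. Two engines drive this: a Combes--Thomas decay estimate for the projectors, and a recentering of the switch functions at $\bfn$.

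For the first engine, write $P_\lambda(H)=\frac{1}{2\pi i}\oint_\Gamma(z-H)^{-1}\,dz$, where $\Gamma$ encircles $\Sigma(H)\cap(-\infty,\lambda)$ and crosses the real axis inside $(\lambda-\delta,\lambda+\delta)$ at distance $\sim\delta$ from $\Sigma(H)$. The short-range hypothesis \eqref{eq-1c} together with the Combes--Thomas conjugation argument gives $|(z-H)^{-1}(\bfx,\bfy)|\le C_\nu\delta^{-1}e^{-(\delta/C_\nu)|\bfx-\bfy|}$ on $\Gamma$, hence $|P_\lambda(H)(\bfx,\bfy)|\le C_\nu\delta^{-1}e^{-(\delta/C_\nu)|\bfx-\bfy|}$. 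For the second, introduce the recentered switch functions $\Lambda_j^\bfn=\1_{\{x_j\ge n_j\}}$. The value of $\sigma$ is unchanged when $\Lambda_j$ is replaced by $\Lambda_j^\bfn$ (independence of $\sigma$ of the switch profile, cf. \cite{ASS94,EGS05}: the difference is a slab bounded in the $j$th direction and contributes a vanishing commutator trace). Since $[P,\Lambda_1^\bfn](\bfx,\bfy)=P(\bfx,\bfy)\,(\Lambda_1^\bfn(\bfy)-\Lambda_1^\bfn(\bfx))$ is carried by pairs straddling $x_1=n_1$, and similarly for $\Lambda_2^\bfn$ and $x_2=n_2$, the decay of $P$ confines the kernel of $P[[P,\Lambda_1^\bfn],[P,\Lambda_2^\bfn]]$ to within $\sim\delta^{-1}$ of $\bfn$; consequently $\sigma(H,\lambda)$ is determined, up to an error $C_\nu\delta^{-N}e^{-\delta r/(2C_\nu)}$ for a fixed $N$, by the restriction of $P_\lambda(H)$ to $B_{4r}(\bfn)$.

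It then remains to estimate $P_1-P_2$ on $B_{4r}(\bfn)$. Read in kernels, the resolvent identity $(z-H_1)^{-1}-(z-H_2)^{-1}=(z-H_1)^{-1}(H_1-H_2)(z-H_2)^{-1}$ splits the sum over intermediate sites: those in $B_{4r}(\bfn)$ are controlled by $\epsilon$ through \eqref{eq: H1_H2_coincide} (using the short-range decay to sum the geometric series), while those outside contribute $\lesssim e^{-\delta r/C_\nu}$ for $\bfx,\bfy$ near $\bfn$, since the two resolvents must then bridge a distance $\gtrsim r$. Integrating over $\Gamma$ transfers the bound to $P_1-P_2$. Since the conductance functional is cubic in $P$, its increment is linear in $P_1-P_2$ up to the same tails; combined with the localization of the previous paragraph this yields \eqref{eq: sigma_B_locally_determined}. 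Converting the entrywise estimate \eqref{eq: H1_H2_coincide} into a summed one costs only logarithmic factors, which I absorb into the clean power $\epsilon^{1/2}$, and the exponent $12$ merely records the accumulated powers of $\delta^{-1}$ from the resolvent bounds, the contour, and the geometric sums over the three projector factors.

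I expect the principal difficulty to be the quantitative locality of the index in the second paragraph: one must show not merely that the defining trace converges but that it is pinned down, with explicit exponential accuracy and explicit gap dependence, by $P_\lambda(H)$ on a finite ball. This forces a careful interleaving of the recentering with Combes--Thomas while tracking how $\delta$ enters each commutator and each geometric sum. Moreover the recentering, though classical, must be justified as an identity between trace-class operators rather than formally, since the intervening slabs $\Lambda_j^\bfn-\Lambda_j$ are unbounded in one direction.
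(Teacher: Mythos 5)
Your outline reproduces the paper's architecture step for step: recentering at $\bfn$ via the invariance of $\sigma$ under translated switch functions (the paper's Proposition \ref{prop:1}, implemented there by conjugating $H_j$ with the translation and using cyclicity of the trace), Combes--Thomas decay for the spectral projectors (Lemma \ref{lem:1a}), the contour/resolvent-identity argument showing $P_1-P_2$ is small on a concentric smaller ball (Lemma \ref{lem:1b}), and the telescoping of the cubic conductance functional into three trilinear traces, each carrying one factor of $P_1-P_2$, which are then estimated by locality of the commutator kernels (Proposition \ref{prop:3}). So the approach is the paper's; the issue is not the route but one quantitative claim inside it.

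The genuine gap is your assertion that converting the entrywise estimate into a summed one ``costs only logarithmic factors,'' which you then absorb into $\epsilon^{1/2}$. The cost is in fact polynomial in $r$, and the absorption fails. Concretely, consider the telescoped term $\Tr\bigl(P_2\,[P_1-P_2,\Lambda_1]\,[P_1,\Lambda_2]\bigr)$ on the region where all three lattice points lie in $B_{2r}(0)$: if you bound the middle kernel only by its entrywise smallness $\varepsilon'$, the remaining factors decay in $|\bfx-\bfy|$ (from $P_2$) and in $|z_1-x_1|$, $|z_2|$, $|x_2|$ (from $[P_1,\Lambda_2]$), but nothing decays in $x_1$; the sum over $|x_1|\lesssim r$ then produces a factor of order $r$, yielding a bound of the form $C_\nu\delta^{-12}\bigl(e^{-\delta r/2C_\nu}+r\epsilon\bigr)$. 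This is strictly weaker than \eqref{eq: sigma_B_locally_determined}: when $r\gg\epsilon^{-1/2}$ the term $r\epsilon$ cannot be absorbed into $\epsilon^{1/2}$ (it would still suffice for the application to Theorem \ref{thm:2}, where $\epsilon$ is exponentially small in $r$, but it does not prove the proposition as stated, where $r$ and $\epsilon$ are independent). The missing idea --- and the actual origin of the exponent $1/2$ in the statement --- is interpolation between the two available bounds: $\min\bigl(\varepsilon',\,Ce^{-2\beta t}\bigr)\leq C\,(\varepsilon')^{1/2}e^{-\beta t}$, so that the small factor retains half of its exponential decay and every lattice sum converges with constants independent of $r$. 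This is exactly the mechanism of the paper's Lemma \ref{lemma: all} (estimate \eqref{eq-1y}) and of steps 3--4 in the proof of Proposition \ref{prop:3}; once it is inserted, your argument goes through and coincides with the paper's.
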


\begin{proof}[Proof of Theorem \ref{thm:2} assuming Proposition \ref{prop:2}] 
1. We recall that $\nu$ is a fixed parameter. In this first step, we set the values the constants $\alpha_\nu$ and $R_\nu$. Let $C_\nu$ be given by Proposition \ref{prop:2}; we set $\alpha_\nu = 300 C_\nu$. We now define $R_\nu$. The quantity $R^{12} e^{-\nu R/2}$ goesto $0$ as $R$ goes to infinity. Therefore, there exists $R_\nu \geq 4$ such that for all $R \geq R_\nu$,
\begin{equation}\label{eq-2e}
    R^{12} e^{-\nu R/2} < \dfrac{1}{2}.
\end{equation}

Fix now $R \geq R_\nu$ (in particular, \eqref{eq-2e} holds); and define 
\begin{equation}\label{eq-2c}
\delta = \alpha_\nu \frac{\ln R}{3 R} = 100 C_\nu \dfrac{\ln R}{R}.
\end{equation}
We will prove that $\MG \cap \Sigma(H_e)$ is $3\delta$-dense within $\MG$, that is, $\alpha_\nu \frac{\ln R}{R}$-dense within $\Sigma(H_e)$. 

2. Let us assume for now that the following statements hold,
\begin{align}
    (\lambda_*-\delta, \lambda_* + \delta) \cap \Sigma(H_e) = \emptyset,\\
   \label{eq-2d} (\lambda_*-\delta, \lambda_* + \delta) \cap \Sigma(H_+) = \emptyset,\\
    (\lambda_*-\delta, \lambda_* + \delta) \cap \Sigma(H_-) = \emptyset,
\end{align}
and let's aim for a contradiction. Note that these statements imply that $H_e, H_\pm$ satisfy Assumption \ref{a4}. 

Since $\Omega$ has filling radius at least $R$, there exists $\bfn \in \ZZ^2$ such that $B_{8r}(\bfn) \subset \Omega$, $r = R/8$. We now look at $(H_e-H_+)(\bfx,\bfy)$ for $\bfx,\bfy$ in $B_{4r}(\bfn)$. Because $\bfx, \bfy \in \Omega$, we have:
\begin{equation}
    (H_e-H_+)(\bfx,\bfy) = (H_e-\1_\Omega H_+ \1_\Omega - \1_{\Omega^c} H_- \1_{\Omega^c})(\bfx,\bfy) = E(\bfx,\bfy),
\end{equation}
where $E$ is the operator defined in Assumption \ref{a2}. Moreover, 
\begin{equation}
    d(\bfx,\partial \Omega) \geq d(\bfn,\partial \Omega) - |\bfx-\bfn| \geq 8r- 4r = 4r,
\end{equation}
because $B_{8r}(\bfn) = B_R(\bfn) \subset \Omega$. It follows from \eqref{eq-1a} that 
\begin{equation}
    \big| (H_e-H_+)(\bfx,\bfy) \big| \leq \nu^{-1} e^{- 8\nu r}, \qquad \bfx, \bfy\in B_{4r}(\bfn).
\end{equation}

Proposition \ref{prop:2} then yields
\begin{equation}
    \big| \sigma(H_e,\lambda_*) - \sigma(H_+,\lambda_*) \big| \leq \dfrac{C_\nu}{\delta^{12}} \exp\left(-\dfrac{\delta r}{2C_\nu}\right) + \dfrac{C_\nu \nu^{-1/2}}{\delta^{12}}e^{-4 \nu r}.
\end{equation}
We recall that $\delta$ has the value \eqref{eq-2c}. Therefore, since $C_\nu \geq 1$ and $R \geq R_\nu \geq 4$,
\begin{align}
    \dfrac{C_\nu}{\delta^{12}} \exp\left(-\dfrac{\delta r}{2C_\nu}\right) & = \dfrac{C_\nu R^{12}}{(100 C_\nu \ln R)^{12}} \exp\left(-\dfrac{100 C_\nu \ln R}{R} \cdot \dfrac{R}{16C_\nu}\right) 
    \\
    & \leq R^{12} \exp\left(-\dfrac{25}{4} R\right) \leq R^{-1/2} \leq \dfrac{1}{2}.
\end{align}
The last two inequalities are true for $R\geq 4$. Likewise, because $R$ satisfies \eqref{eq-2e},
\begin{equation}
   \dfrac{C_\nu \nu^{-1/2}}{\delta^{12}}e^{-4\nu r} = \dfrac{C_\nu R^{12}}{(100 C_\nu \ln R)^{12}} \exp\left( -\dfrac{\nu R}{2} \right) \leq R^{12} \exp\left( -\dfrac{\nu R}{2} \right) < \dfrac{1}{2}.
\end{equation}
Going back to \eqref{eq-2d}, we conclude that
\begin{equation}
    \big| \sigma(H_e,\lambda_*) - \sigma(H_+,\lambda_*) \big| < 1. 
\end{equation}
Since bulk conductances are integers (see \cite{EGS05}*{Proposition 3} and Remark \ref{rmk: interger_conductance} below), we conclude that $\sigma(H_e,\lambda_*) = \sigma(H_+,\lambda_*)$.

Similarly, we conclude that $\sigma(H_e,\lambda_*) = \sigma(H_-,\lambda_*)$. This cannot be true, since $\sigma(H_+,\lambda_*) \neq \sigma(H_-,\lambda_*)$. We conclude that for each $\lambda_* \in \MG$, one of the statements among \eqref{eq-2d} must fail. In other words, for all $\lambda_* \in \MG$, there exists some $\lambda \in \Sigma(H_e) \cup \Sigma(H_+) \cup \Sigma(H_-)$ such that $|\lambda-\lambda_*| \leq \delta$.

It remains to show that $\Sigma(H_e) \cap \MG$ is $3\delta$-dense within $\MG$. Write $\MG = (a,b)$ with $b - a>3\delta$ (otherwise any subset is $3\delta$-dense by definition). Let $\lambda_* \in (a+\delta,b-\delta)$ and $\lambda \in \Sigma(H_e) \cup \Sigma(H_+) \cup \Sigma(H_-)$ such that $|\lambda-\lambda_*| \leq \delta$. In particular, $\lambda \in (a,b) = \MG$, which is a spectral gap of $H_\pm$, so $\lambda \in \Sigma(H_e)$. Let now $\lambda_* \in (a,\mu_*)$, $\mu_* = a+2\delta$; since $b-a > 3\delta$, $\mu_* \in (a+\delta,b-\delta)$ and by the previous step there exists $\mu \in \Sigma(H_e)$ such that $|\mu-\mu_*| \leq \delta$. In particular,  $|\mu-\lambda_*| \leq 3\delta$. A similar argument works for $\lambda_* \in (b-2\delta,b)$. We conclude that $\MG \cap \Sigma(H_e)$ is $3\delta$-dense within $\MG$.
\end{proof}

\section{Proof of Proposition \ref{prop:2}}

\subsection{On short-range Hamiltonians}
Throughout the proofs below, we will use the following estimates, proved in Appendix \ref{App}: For $a \in (0,1]$, $R>0$, we have
\begin{align}\label{eq-1e}
    \sum_{s \in \ZZ} e^{-2 a|s|} & \leq \dfrac{2}{a}, \quad \sum\limits_{\bfx\in \ZZ^2} e^{-2a|\bfx|} \leq \dfrac{4}{a^2}, \quad \text{and} \quad \sum_{|\bfx| \geq R, \bfx\in\ZZ^2}  e^{-2a|\bfx|} \leq  \dfrac{8}{a^2} e^{-a R}. 
\end{align}

We make here a few observations on the selfadjoint, short-range Hamiltonians $H$ on $\ell^2(\ZZ^2,\CC^d)$. First, they are bounded in terms of the (fixed) parameter $\nu \in (0,1]$ quantifying the short-range condition \eqref{eq-1c}. Specifically, an application of Schur's test gives:
\begin{equation}\label{eq-1g}
    \| H \| \leq \dfrac{4}{\nu^3}.
\end{equation}
We refer to Appendix \ref{App} for the proof.

As in \cites{EGS05,AW15}, we introduce
\begin{equation}
    S_\alpha:= \sup_{\bfx\in \ZZ^2} \sum_{\bfy\in \ZZ^2} |H(\bfx, \bfy)|\left(e^{\alpha |\bfx - \bfy|} - 1\right).
\end{equation}
We note that if $H$ is short range under the definition \eqref{eq-1c}, then for any $\alpha\in(0,2\nu)$, $S_\alpha<+\infty$. Also, for later use, if 
 $\alpha \in (0,\nu]$:
\begin{equation}\label{eq-1h}
    \dfrac{S_\alpha}{\alpha} \leq \dfrac{S_\nu}{\nu} \leq \dfrac{16}{\nu^4},
\end{equation}
Again, see Appendix \ref{App} for the proof.

We recall the Combes--Thomas inequality \cite{CT73}:

\begin{prop}\label{prop: CTthm}\cite{AW15}*{Theorem 10.5}
    Let $H$ be a selfadjoint, short-range operator on $\ell^2(\ZZ^2;\CC^d)$. If $\alpha \in (0,2\nu)$ and $z \in \CC$ are such that $\Delta := d(z,\Sigma(H)) > S_\alpha$, then we have 
    \begin{equation}
        \label{eq: CTthm_1}
    |(H-z)^{-1}(\bfx, \bfy)|\leq \frac{1}{\Delta- S_\alpha}\exp(-\alpha|\bfx - \bfy|).   
    \end{equation}
\end{prop}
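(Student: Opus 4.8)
The plan is to run the classical Combes--Thomas conjugation argument: twist $H$ by a bounded exponential weight, control the resulting (non-selfadjoint) perturbation through $S_\alpha$, and invert by a Neumann series. Since the asserted bound concerns one matrix element at a time, I fix the target pair $\bfx, \bfy \in \ZZ^2$ and aim to estimate $(H-z)^{-1}(\bfx,\bfy)$. First I would introduce the bounded, $1$-Lipschitz weight $\rho(\bfw) := \min(|\bfw - \bfy|, N)$ with $N \geq |\bfx - \bfy|$, chosen so that $\rho(\bfx) - \rho(\bfy) = |\bfx - \bfy|$; the associated multiplication operator $G := e^{\alpha \rho}$ is then bounded with bounded inverse, so the twisted operator $H_\alpha := G H G^{-1}$ is well defined.

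The core estimate is on the twist $H_\alpha - H$, whose kernel is $(H_\alpha - H)(\bfw, \bfw') = H(\bfw,\bfw')\big(e^{\alpha(\rho(\bfw) - \rho(\bfw'))} - 1\big)$. Using the elementary inequality $|e^a - 1| \leq e^{|a|} - 1$ (valid for real $a$, the only nontrivial case $a<0$ amounting to $2 \leq e^a + e^{-a}$) together with the Lipschitz bound $|\rho(\bfw) - \rho(\bfw')| \leq |\bfw - \bfw'|$, each entry is dominated by $|H(\bfw,\bfw')|(e^{\alpha |\bfw - \bfw'|} - 1)$. Because $|H(\bfw,\bfw')|$ is symmetric in its arguments (selfadjointness of $H$), Schur's test bounds both the row and the column sums of this dominating kernel by $S_\alpha$, giving $\|H_\alpha - H\| \leq S_\alpha$.

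The rest is soft. Since $H$ is selfadjoint, $\|(H - z)^{-1}\| = \Delta^{-1}$. Factoring $H_\alpha - z = (H - z)\big(I + (H-z)^{-1}(H_\alpha - H)\big)$ and using $\|(H-z)^{-1}(H_\alpha - H)\| \leq S_\alpha / \Delta < 1$ (this is exactly where the hypothesis $\Delta > S_\alpha$ enters), the bracketed factor is invertible by Neumann series and $\|(H_\alpha - z)^{-1}\| \leq (\Delta - S_\alpha)^{-1}$. Finally, the conjugation identity $(H_\alpha - z)^{-1} = G (H-z)^{-1} G^{-1}$ gives $(H-z)^{-1}(\bfx,\bfy) = e^{-\alpha(\rho(\bfx) - \rho(\bfy))} (H_\alpha - z)^{-1}(\bfx,\bfy)$; bounding the single entry of $(H_\alpha - z)^{-1}$ by its operator norm and substituting $\rho(\bfx) - \rho(\bfy) = |\bfx - \bfy|$ yields the claimed inequality.

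I expect the main difficulty to be bookkeeping rather than a deep obstacle: one must keep the weight bounded so that $G^{\pm 1}$ are genuine bounded operators, which forces the truncation at level $N$ — harmless because the target entry $(H - z)^{-1}(\bfx,\bfy)$ is independent of $N$ once $N \geq |\bfx - \bfy|$. The one genuinely non-automatic point is that $H_\alpha$ is not selfadjoint, so its resolvent bound cannot come from spectral theory and must be produced by the Neumann series above; the symmetry of $|H(\bfw,\bfw')|$ is precisely what lets Schur's test deliver the clean constant $S_\alpha$ rather than a less favorable bound.
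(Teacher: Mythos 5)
Your proof is correct: the paper itself does not prove this proposition but simply cites \cite{AW15}*{Theorem 10.5}, and your argument -- conjugation by a truncated exponential weight, the bound $\|H_\alpha - H\| \leq S_\alpha$ via Schur's test, and inversion of $H_\alpha - z$ by a Neumann series -- is precisely the standard Combes--Thomas proof given in that reference. All the delicate points (boundedness of the weight via the cutoff $N$, the non-selfadjointness of $H_\alpha$, and the use of symmetry of $|H(\bfw,\bfw')|$ in Schur's test) are handled correctly.
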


\subsection{Spectral projections}

An application of the Combes--Thomas inequality is control of the kernels of spectral projections:

\begin{lemma}\label{lem:1a} There exists a constant $C_\nu$, such that for any $H, \lambda$ satisfying Assumption \ref{a4}:
\begin{equation}\label{eq-1i}
     \big| P_\lambda(H)(\bfx,\bfy) \big| \leq \frac{C_\nu}{\delta}\exp\left(-\dfrac{\delta}{C_\nu}|\bfx - \bfy|\right).
\end{equation}
\end{lemma}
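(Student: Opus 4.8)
The plan is to represent the spectral projection as a Riesz contour integral of the resolvent and feed the Combes--Thomas bound of Proposition \ref{prop: CTthm} into it, with the crucial twist that the decay rate $\alpha$ is chosen proportional to $\delta$. First I would dispose of the degenerate cases: if $\lambda \notin (-\|H\|, \|H\|)$ then $P_\lambda(H)$ is either $0$ or $\1$, and \eqref{eq-1i} holds trivially since $C_\nu/\delta \geq 1$. So I may assume $|\lambda| \leq \|H\| \leq 4\nu^{-3}$ by \eqref{eq-1g}. Writing $M = \|H\| + 1$, I take $\Gamma$ to be the positively oriented boundary of the rectangle $[-M,\lambda] \times [-\delta/2, \delta/2]$, which encircles $\Sigma(H) \cap (-\infty,\lambda)$ and lies in the resolvent set, so that $P_\lambda(H) = \frac{1}{2\pi i}\oint_\Gamma (z-H)^{-1}\,dz$.

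The key geometric observation is that the gap hypothesis of Assumption \ref{a4} forces $\Delta := d(z,\Sigma(H)) \geq \delta/2$ for every $z \in \Gamma$: on the horizontal sides $\Im z = \pm \delta/2$ this is immediate since $\Sigma(H) \subset \RR$; on the right side $\Re z = \lambda$ it follows from $(\lambda-\delta,\lambda+\delta) \cap \Sigma(H) = \emptyset$; and on the left side the distance is at least $1 > \delta/2$. Next I would set $\alpha = \delta \nu^4/64$, which lies in $(0,\nu]$ because $\delta < 1$ and $\nu \leq 1$, so that \eqref{eq-1h} gives $S_\alpha \leq 16\alpha\nu^{-4} = \delta/4$, whence $\Delta - S_\alpha \geq \delta/4 > 0$ uniformly on $\Gamma$. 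Proposition \ref{prop: CTthm} then yields the pointwise bound $|(z-H)^{-1}(\bfx,\bfy)| \leq \frac{4}{\delta}\, e^{-\alpha|\bfx-\bfy|}$ for all $z \in \Gamma$.

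To conclude I would estimate the contour integral by the length of $\Gamma$ times this uniform resolvent bound. Since $|\lambda| \leq 4\nu^{-3}$ and $\delta < 1$, the length $|\Gamma| = 2(\lambda + M) + 2\delta$ is bounded by a constant depending only on $\nu$. Therefore $|P_\lambda(H)(\bfx,\bfy)| \leq \frac{1}{2\pi}|\Gamma| \cdot \frac{4}{\delta}\, e^{-\alpha|\bfx-\bfy|} \leq \frac{C_\nu}{\delta}\exp\!\left(-\frac{\delta}{C_\nu}|\bfx-\bfy|\right)$ after enlarging $C_\nu$ to absorb the numerical constants and the factor $64/\nu^4$ appearing in the exponent, which is exactly \eqref{eq-1i}.

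I expect the only genuine subtlety to be the coupling between the decay rate $\alpha$ and the Combes--Thomas denominator $\Delta - S_\alpha$: because $S_\alpha$ grows linearly in $\alpha$ with a $\nu$-dependent slope, one cannot take a fixed $\alpha$; the requirement $S_\alpha < \Delta \approx \delta/2$ forces $\alpha \lesssim \delta\nu^4$, and it is precisely this constraint that produces the $\delta/C_\nu$ decay rate (rather than an $O(1)$ rate) as $\delta \to 0$. A secondary point to handle carefully is that the contour length, and hence the implied constant, must be controlled independently of $\lambda$, which is the reason for reducing to $|\lambda| \leq \|H\|$ at the outset.
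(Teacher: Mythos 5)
Your proof is correct and follows essentially the same route as the paper's: a Riesz contour representation of $P_\lambda(H)$, the Combes--Thomas bound with decay rate $\alpha$ chosen proportional to $\nu^4\delta$ so that $S_\alpha$ stays below the gap-induced distance $\Delta$, and a contour length controlled by $\|H\| \leq 4\nu^{-3}$ from \eqref{eq-1g}. The only differences are cosmetic: you use an explicit rectangle (and a reduction to $|\lambda| \leq \|H\|$), whereas the paper uses a stadium-shaped contour that simply never needs to approach a far-away $\lambda$, and your numerical constants ($\alpha = \delta\nu^4/64$, $\Delta - S_\alpha \geq \delta/4$) differ harmlessly from the paper's ($\alpha = \delta\nu^4/32$, $\Delta - S_\alpha \geq \delta/2$).
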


\begin{proof} 
Set $\alpha = 2^{-5} \nu^4 \delta$,  so that $\alpha \leq \nu$ 
and $S_\alpha \leq \frac{16\alpha}{\nu^4}\leq \frac{\delta}{2}$, see \eqref{eq-1h}. Let $\gamma$ be a contour enclosing $\Sigma(H) \cap (-\infty,\lambda)$, at least $\delta$-distant from $\Sigma(H)$. For $z \in \gamma$, we have:
\begin{equation}\label{eq-1j}
    (H-z)^{-1}(\bfx,\bfy) \leq \dfrac{e^{-\alpha |\bfx-\bfy|}}{\Delta-S_\alpha} \leq \frac{e^{-\alpha |\bfx-\bfy|}}{\delta - \frac{\delta}{2}}\leq \frac{2}{\delta}\exp\left(-\alpha|\bfx - \bfy|\right). 
\end{equation}
Integrating this over $\gamma$, we have:
\begin{equation}
   \big| P_\lambda(H)(\bfx,\bfy) \big| = \left| \dfrac{1}{2\pi i} \oint_\gamma (H-z)^{-1}(\bfx,\bfy) dz \right| \leq \frac{|\gamma|}{\pi\delta}\exp\left(-\alpha|\bfx - \bfy|\right). 
\end{equation}

Without loss of generalities, we may assume that $|\gamma| \leq 2\pi + 4 \| H \|$ -- see Figure \ref{fig:3} 
%{\xz also depends on where $\lambda$ is, better only consider $\lambda\in [-\Vert H \Vert, \Vert H \Vert]$} %\alexis{I think it's ok because $\gamma$ does not need to enclose $\lambda$? If $\lambda < -\| H\|$ then $P_\lambda = 0$, I can just take $\gamma$ a tiny circle around $\lambda$. If $\lambda > \| H\|$, $P_\lambda = Id$ and I can take $\gamma$ that encloses all of the spectrum of $H$, but not $\lambda$. }. {\xz Ah makes sense. Shall we add a sentence for it?} \alexis{Perhaps the figure is OK?} 
From the bound \eqref{eq-1g} on $\| H \|$, we deduce that
\begin{equation}\label{eq-1k}
    \dfrac{|\gamma|}{\pi} \leq 2 + \dfrac{16}{\pi\nu^3} \leq \dfrac{8}{\nu^3}.
\end{equation}
We conclude that 
\begin{equation}
    \big| P_\lambda(H)(\bfx,\bfy) \big| \leq \frac{8}{\nu^3\delta}\exp\left(-\dfrac{\nu^4 \delta}{32}|\bfx - \bfy|\right).
\end{equation}
 This yields \eqref{eq-1i} (with for instance $C_\nu =32 \nu^{-4}$).
\end{proof}

\begin{figure}[t]
  \centering
  \includegraphics[width=1\textwidth]{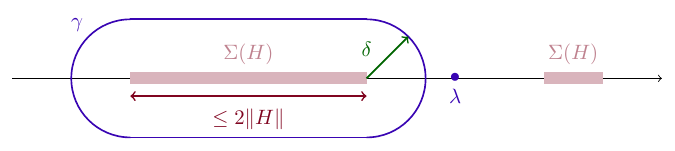}
  \caption{The contour $\gamma$. We note that the spectrum of $H$ is contained in $[-\|H\|,\|H\|]$, so it has diameter less than $2\|H\|$. Since $\delta \leq 1$, the two half-circles have perimeter less than $2\pi$ and the length of $\gamma$ is less than $4\|H\|+2\pi$. Note that $\gamma$ does not need to enclose $\lambda$.}
  \label{fig:3}
\end{figure}

\begin{remark}\label{rmk: interger_conductance}
    As a result, if $H$, $\lambda$ satisfy Assumption \ref{a4}, then the open bounded interval $(\lambda - \delta, \lambda + \delta)$ satisfies condition \cite{EGS05}*{(1.2)}; and it follows that $\sigma(H, \lambda)$ is well-defined and is an integer \cite{EGS05}*{Proposition 3}. 
\end{remark}

\begin{lemma}\label{lem:1b} There exists a constant $C_\nu$ such that the following holds. Let $r, \epsilon > 0$, and two triplets $H_1, \lambda, \delta$ and $H_2, \lambda, \delta$ satisfying Assumption \ref{a4}, such that for $\bfx,\bfy \in B_{4r}(0)$,
\begin{equation}\label{eq-1l}
    |H_1(\bfx,\bfy)-H_2(\bfx,\bfy)| \leq \epsilon.
\end{equation}
Then for $(\bfx,\bfy)$ in $B_{2r}(0)$:
\begin{equation}\label{eq-1n}
    |P_\lambda(H_1)(\bfx,\bfy)-P_\lambda(H_2)(\bfx,\bfy)| \leq \dfrac{C_\nu}{\delta^6} \left( \exp\left(-\dfrac{\delta r}{C_\nu}\right) + \epsilon \right).
\end{equation}
\end{lemma}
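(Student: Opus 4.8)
The plan is to represent both spectral projections through the Riesz contour integral used in Lemma \ref{lem:1a} and to bound their difference via the second resolvent identity. First I would fix a single contour $\gamma$ that works for \emph{both} operators. Since $\Sigma(H_1)$ and $\Sigma(H_2)$ both avoid $(\lambda-\delta,\lambda+\delta)$ and are contained in $[-4\nu^{-3},4\nu^{-3}]$ by \eqref{eq-1g}, I can choose $\gamma$ enclosing $\Sigma(H_i)\cap(-\infty,\lambda)$ for $i=1,2$, crossing the real axis inside the gap, staying at distance $\geq\delta$ from $\Sigma(H_1)\cup\Sigma(H_2)$, and with length $|\gamma|\leq 4\|H\|+2\pi$ so that \eqref{eq-1k} applies. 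With $\alpha = 2^{-5}\nu^4\delta$ as in Lemma \ref{lem:1a}, the resolvent bound \eqref{eq-1j}, namely $|(H_i-z)^{-1}(\bfx,\bfy)|\leq \tfrac{2}{\delta} e^{-\alpha|\bfx-\bfy|}$, then holds on $\gamma$ for $i=1,2$.

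Next I write
\[
P_\lambda(H_1)-P_\lambda(H_2) = \frac{1}{2\pi i}\oint_\gamma \big[(H_1-z)^{-1}-(H_2-z)^{-1}\big]\,dz
\]
and substitute the resolvent identity $(H_1-z)^{-1}-(H_2-z)^{-1} = (H_1-z)^{-1}(H_2-H_1)(H_2-z)^{-1}$. At the level of kernels this writes $\big[(H_1-z)^{-1}-(H_2-z)^{-1}\big](\bfx,\bfy)$ as a double sum over intermediate lattice sites $\bfz,\bfw\in\ZZ^2$ (here $z$ is the contour variable, $\bfz,\bfw$ the summation points) of the product $(H_1-z)^{-1}(\bfx,\bfz)\,(H_2-H_1)(\bfz,\bfw)\,(H_2-z)^{-1}(\bfw,\bfy)$. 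The two outer factors decay like $\tfrac{2}{\delta}e^{-\alpha|\bfx-\bfz|}$ and $\tfrac{2}{\delta}e^{-\alpha|\bfw-\bfy|}$, while the middle factor obeys two bounds: $|(H_2-H_1)(\bfz,\bfw)|\leq\epsilon$ whenever $\bfz,\bfw\in B_{4r}(0)$ by \eqref{eq-1l}, and $|(H_2-H_1)(\bfz,\bfw)|\leq 2\nu^{-1}e^{-2\nu|\bfz-\bfw|}$ always, since $H_1,H_2$ are short range.

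The core is then to split the double sum according to whether $\bfz,\bfw$ both lie in $B_{4r}(0)$. On the region $\bfz,\bfw\in B_{4r}(0)$ I use the bound $\epsilon$ for the middle factor and sum the two geometric series $\sum_\bfz e^{-\alpha|\bfx-\bfz|}$, $\sum_\bfw e^{-\alpha|\bfw-\bfy|}$ via \eqref{eq-1e}, producing a term of size $C_\nu\,\delta^{-2}\alpha^{-4}\,\epsilon = C_\nu\,\delta^{-6}\epsilon$. On the complement at least one of $\bfz,\bfw$ lies outside $B_{4r}(0)$; since $\bfx,\bfy\in B_{2r}(0)$, if $\bfz\notin B_{4r}(0)$ then $|\bfx-\bfz|\geq 2r$, and I extract a factor $e^{-\alpha r}$ by writing $e^{-\alpha|\bfx-\bfz|}\leq e^{-\alpha r}\,e^{-\alpha|\bfx-\bfz|/2}$. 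The leftover half-exponential keeps the $\bfz$-sum finite through \eqref{eq-1e}, the short-range decay $e^{-2\nu|\bfz-\bfw|}$ keeps the $\bfw$-sum finite (bounding the remaining outer factor by $1$), and this yields a term of size $C_\nu\,\delta^{-4}e^{-\alpha r}$; the symmetric sub-case $\bfw\notin B_{4r}(0)$ is identical. Integrating over $\gamma$ with $|\gamma|/(2\pi)\leq C_\nu$ from \eqref{eq-1k}, recalling $\alpha=\delta/C_\nu$ and $\delta<1$ (so $\delta^{-4}\leq\delta^{-6}$), delivers \eqref{eq-1n}.

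The main obstacle I anticipate is the off-diagonal region, where $H_1-H_2$ is not controlled by $\epsilon$ and must be tamed purely by the short-range decay of the two Hamiltonians together with the geometric separation $|\bfx-\bfz|\geq 2r$ (or $|\bfw-\bfy|\geq 2r$) coming from $\bfx,\bfy\in B_{2r}(0)$ and the ball $B_{4r}(0)$ on which they agree. The delicate point is to split the exponential decay so as to simultaneously extract the gain $e^{-\alpha r}$ and retain enough decay to sum the resolvent kernels over the whole lattice, while tracking the powers of $\delta$ through $\alpha=2^{-5}\nu^4\delta$ to land exactly on $\delta^{-6}$.
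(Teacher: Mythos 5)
Your proposal is correct and follows essentially the same route as the paper: a Riesz contour integral with a single contour valid for both operators, the second resolvent identity, and a splitting of the resulting double lattice sum according to whether the intermediate points lie in $B_{4r}(0)$ (where \eqref{eq-1l} gives the $\epsilon$ term) or not (where the geometric separation $|\bfx-\bfz|\geq 2r$ yields the $e^{-\alpha r}$ gain). The only cosmetic difference is bookkeeping in the off-diagonal region -- you retain the short-range decay of $H_1-H_2$ and split a resolvent exponential in half, while the paper bounds $|H_1-H_2|$ by a constant and sums the full resolvent exponentials -- and both land on the same $\delta^{-6}$ power.
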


\begin{proof} Let $\alpha, \gamma$ be as in the proof of Lemma \ref{lem:1a}. We have:
\begin{equation}
   \big(P_\lambda(H_1) - P_\lambda(H_2)\big) (\bfx,\bfy) = \sum_{\bfx',\bfy'} \dfrac{1}{2\pi i} \oint_\gamma (H_2-z)^{-1}(\bfx,\bfx') \big( H_2(\bfx',\bfy') - H_1(\bfx',\bfy') \big) (H_1-z)^{-1}(\bfy',\bfy) dz. 
\end{equation}
From \eqref{eq-1j} then \eqref{eq-1k}, and using the bound \eqref{eq-1k} for $|\gamma|$:
\begin{align}
     \big| \big(P_\lambda(H_1) - P_\lambda(H_2)\big) (\bfx,\bfy) \big| & \leq \dfrac{2|\gamma|}{\pi \delta^2}\sum_{\bfx',\bfy'}  \exp\left(-\alpha|\bfx - \bfx'|-\alpha|\bfy - \bfy'|\right) \big| H_2(\bfx',\bfy') - H_1(\bfx',\bfy') \big|
    \\
    & \leq \dfrac{2^4}{\nu^3\delta^2 }\sum_{\bfx',\bfy'}  \exp\left(-\alpha|\bfx - \bfx'|-\alpha|\bfy - \bfy'|\right) \big| H_2(\bfx',\bfy') - H_1(\bfx',\bfy') \big|. \label{eq-1f}
\end{align}
We split the RHS sum in two parts, depending whether $\bfx', \bfy' \in B_{4r}(0)$ or not. When they do, we can use the bound \eqref{eq-1l} on the kernel of $H_1-H_2$. It yields
\begin{align}
    & \sum_{x',y' \in B_{4r}(0)}  \exp\left(-\alpha|\bfx - \bfx'|-\alpha|\bfy - \bfy'|\right) \big| H_2(x',y') - H_1(x',y') \big|
    \\
   \leq &  \epsilon \sum_{x',y' \in \ZZ^2}  \exp\left(-\alpha|\bfx - \bfx'|-\alpha|\bfy - \bfy'|\right) 
    \\
   \leq & \epsilon \sum_{x',y' \in \ZZ^2}  \exp\left(-\alpha|\bfx'|-\alpha|\bfy'|\right) \leq  \epsilon \left( \dfrac{16}{\alpha^2} \right)^2, 
\end{align}
where in the last line we used \eqref{eq-1e}.

When now restricting to $\bfx'$ or $\bfy' \in B_{4r}(0)^c$, we note that for  $\bfx, \bfy \in B_{2r}(0)$ either $|\bfx - \bfx'| \geq 2r$ or $|\bfy - \bfy'| \geq 2r$. Thus, we have
\begin{align}
    & \sum_{\bfx' \text{ or } \bfy' \in B_{2r}(0)^c}  \exp\left(-\alpha|\bfx - \bfx'|-\alpha|\bfy - \bfy'|\right) \big|H_2(\bfx',\bfy') - H_1(\bfx',\bfy') \big|
    \\
    \leq &  \dfrac{2}{\nu} \sum_{\bfx'  \text{ or } \bfy' \in B_{4r}(0)^c}  \exp\left(-\alpha|\bfx - \bfx'|-\alpha|\bfy - \bfy'|\right) \leq \dfrac{4}{\nu} \sum_{\bfx' \in B_{2r}(0)^c, \bfy' \in \ZZ^2}  \exp\left(-\alpha|\bfx'|-\alpha|\bfy'|\right)
    \\
    & \leq \dfrac{4}{\nu} \left(\frac{32}{\alpha^2}e^{-\alpha r}\right) \left(\frac{16}{\alpha^2}\right) = \frac{8}{\nu} \left(\frac{2^8}{\alpha^4}\right)e^{-\alpha r}.% \left( \dfrac{4}{\alpha} \right)^4 e^{-\alpha r}.
\end{align}
In the last line, we applied \eqref{eq-1e}. So, heading back to \eqref{eq-1f} and using the value $\alpha = 2^{-5} \nu^4 \delta$ from the proof of Lemma \ref{lem:1a}, we obtain
\begin{equation}
    \big| \big(P_\lambda(H_1) - P_\lambda(H_2)\big) (x,y) \big| \leq \dfrac{2^7}{\nu^4 \delta^2} \left(\dfrac{2^8}{\alpha^4}\right) (e^{-\alpha r}+\epsilon) = \dfrac{2^{35}}{\nu^{20} \delta^6} \left(\exp\left(-\dfrac{\nu^4 \delta}{32}r\right) + \epsilon\right).
\end{equation}
This yields \eqref{eq-1n} (with $C_\nu = 2^{35}\nu^{-20}$ -- we made no attempts to minimize this constant).     
\end{proof}

\subsection{Technical result.} The key technical step in the proof of Proposition \ref{prop:2} is:

\begin{prop}\label{prop:3} Fix $\varepsilon > 0, \ C > 0, \ r >0, \ \beta \in (0,1]$. Let $A_0, A_1, A_2$ be three operators on $\ell^2(\ZZ^2,\CC^d)$ with the following properties:
\begin{itemize}
    \item[(i)] For $j \in \{0,1,2\}$, $|A_j(\bfx,\bfy)| \leq C e^{-2\beta|x-y|}$;
    \item[(ii)] There exists $k \in \{0,1,2\}$ such that if $\bfx,\bfy \in B_{2r}(0)$,then $\big| A_k(\bfx,\bfy) \big| \leq C \varepsilon$.
\end{itemize}
Then $B := A_0[A_1,\Lambda_1][A_2,\Lambda_2]$ is trace-class and 
\begin{equation}
   \big| \Tr(B)  \big| \leq  C^3 \frac{2^{16}}{\beta^6} \left(e^{-\beta r} + \varepsilon^{1/2} \right).
\end{equation}
\end{prop}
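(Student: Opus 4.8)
The plan is to exploit the commutator structure. Writing $G_i := [A_i,\Lambda_i]$, the kernel $G_1(u,v) = A_1(u,v)\big(\Lambda_1(v)-\Lambda_1(u)\big)$ is supported on pairs $u,v$ straddling the line $\{x_1=0\}$, which forces $|u_1|,|v_1|\le |u-v|$; likewise $G_2(v,\bfx) = A_2(v,\bfx)\big(\Lambda_2(\bfx)-\Lambda_2(v)\big)$ is supported where $v,\bfx$ straddle $\{x_2=0\}$, forcing $|v_2|,|\bfx_2|\le |v-\bfx|$. In the product $B = A_0G_1G_2$ the middle index $v$ must therefore straddle \emph{both} lines, hence is pinned within $O(1/\beta)$ of the origin. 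This single observation drives the whole proof.

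First I would establish that $B$ is trace-class together with an a priori bound. Inserting $\sum_v P_v = \1$, where $P_v$ is the projection onto site $v$, I write $G_1G_2 = \sum_v (G_1P_v)(P_vG_2)$. Using (i), the straddle support, and the third estimate in \eqref{eq-1e}, the column and row at $v$ obey $\|G_1P_v\|_{\mathrm{HS}} \le \tfrac{\sqrt2 C}{\beta}e^{-\beta|v_1|}$ and $\|P_vG_2\|_{\mathrm{HS}} \le \tfrac{\sqrt2 C}{\beta}e^{-\beta|v_2|}$, so that
\[
\|G_1G_2\|_1 \;\le\; \sum_v \|G_1P_v\|_{\mathrm{HS}}\,\|P_vG_2\|_{\mathrm{HS}} \;\le\; \frac{2C^2}{\beta^2}\sum_v e^{-\beta|v|} \;\le\; \frac{32C^2}{\beta^4}.
\]
Since $\|A_0\|\le 4C/\beta^2$ by Schur's test and \eqref{eq-1e}, $B=A_0G_1G_2$ is trace-class with $\|B\|_1 \le 128\,C^3/\beta^6$; this already yields the claim when $\varepsilon\ge 1$.

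Because $B$ is trace-class, $\Tr(B) = \sum_\bfx B(\bfx,\bfx)$, so $|\Tr(B)| \le T := \sum_{\bfx,u,v}|A_0(\bfx,u)|\,|G_1(u,v)|\,|G_2(v,\bfx)|$. I would reparametrize by $v$ and the shifts $a=u-v$, $b=\bfx-v$: the straddle conditions admit at most $|a_1|\,|b_2|\le |a|\,|b|$ values of $v$ for each $(a,b)$, while $\max(|u|,|v|,|\bfx|)\le 2(|a|+|b|)$ and the three kernels contribute $\le C^3 e^{-2\beta(|a|+|b|)}$. I then split $T$ according to $|a|+|b|>r$ or $\le r$. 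On $\{|a|+|b|>r\}$ some site leaves $B_{2r}(0)$; peeling off $e^{-\beta r}$ and summing $|a|\,|b|\,e^{-\beta(|a|+|b|)}$ via \eqref{eq-1e} gives a contribution $\le C^3\cdot\mathrm{const}\cdot\beta^{-6}e^{-\beta r}$. On $\{|a|+|b|\le r\}$ all three sites lie in $B_{2r}(0)$, so by (ii) the distinguished factor $A_k$ is $\le C\varepsilon$ there; bounding the other two factors by their exponentials and summing yields $\le C^3\cdot\mathrm{const}\cdot\beta^{-6}\varepsilon$. Adding the two pieces (and using the a priori bound for $\varepsilon\ge1$, i.e. $\min(\varepsilon,1)\le\varepsilon^{1/2}$) produces $|\Tr(B)| \le C^3\,2^{16}\beta^{-6}\big(e^{-\beta r}+\varepsilon^{1/2}\big)$ once the explicit constants are absorbed into $2^{16}$.

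The main obstacle is the trace-class property: each $G_i$ is an infinite strip and is not Hilbert--Schmidt, so $\|XY\|_1\le \|X\|_{\mathrm{HS}}\|Y\|_{\mathrm{HS}}$ cannot be applied to $G_1,G_2$ directly, and the full kernel of $B$ is not absolutely summable either. The device that resolves this is summing over the \emph{middle} (crossing) index $v$, which is the one variable pinned near the origin by both straddle conditions; this simultaneously produces trace-class-ness and localizes everything spatially. The remaining work is routine geometric summation via \eqref{eq-1e}, the only delicate points being the polynomial prefactors $|a|,|b|$ from counting admissible $v$ and the splitting of $e^{-2\beta(\cdot)}$ to extract $e^{-\beta r}$ while retaining summable decay. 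Finally, the exponent $\tfrac12$ in $\varepsilon^{1/2}$ is not sharp — the argument gives a bound linear in $\varepsilon$ for $\varepsilon\le 1$ — but writing $\varepsilon^{1/2}$ conveniently unifies the small- and large-$\varepsilon$ regimes through $\min(\varepsilon,1)\le\varepsilon^{1/2}$.
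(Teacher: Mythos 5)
Your proof is correct, but it follows a genuinely different route from the paper's. The paper first bounds each factor via Lemma \ref{lemma: all}, dominates the whole summand by an explicit product of exponentials $f(2\beta,\bfx,\bfy,\bfz)$, proves trace-class through the pointwise kernel decay of $B$ obtained from the summation inequality \eqref{eq-1r}, splits on $|\bfx|\ge r$ versus $|\bfx|\le r$, and then -- this is the key device -- \emph{interpolates} hypothesis (ii) with hypothesis (i), replacing $\min\bigl(C\varepsilon,\,Ce^{-2\beta|\bfx-\bfy|}\bigr)$ by $C\varepsilon^{1/2}e^{-\beta|\bfx-\bfy|}$ so as to keep summable decay; that interpolation is precisely where the exponent $\varepsilon^{1/2}$ in the statement comes from. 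You instead organize everything around the middle index $v$, which both straddle conditions pin near the origin ($|v_1|\le|a_1|$, $|v_2|\le|b_2|$ in your shift variables $a=u-v$, $b=\bfx-v$): this gives trace-class via the Hilbert--Schmidt factorization $G_1G_2=\sum_v(G_1P_v)(P_vG_2)$ rather than via \eqref{eq-1r}, and it lets you use (ii) \emph{linearly} in $\varepsilon$ on the region $\{|a|+|b|\le r\}$, where all three sites provably lie in $B_{2r}(0)$, without any interpolation. Your approach buys a sharper result (a bound in $\min(\varepsilon,1)$ rather than $\varepsilon^{1/2}$, which implies the stated one since $\min(\varepsilon,1)\le\varepsilon^{1/2}$) and a more transparent geometric picture of why the trace localizes at the crossing of the two lines; the paper's approach buys uniformity -- the single function $f$ and the interpolation trick handle all three cases $k\in\{0,1,2\}$ with essentially one computation and fully tracked constants. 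Three small points to tighten in a written-out version: the count of admissible $v$ is $(|a_1|+1)(|b_2|+1)$, not $|a_1|\,|b_2|$ (it must be nonzero when $a_1=0$); your parenthetical claim that on $\{|a|+|b|>r\}$ ``some site leaves $B_{2r}(0)$'' is neither true nor needed (the peeled factor $e^{-\beta r}$ is all you use there); and for $k=1,2$ you should record explicitly that the two surviving exponentials, e.g. $e^{-2\beta|b-a|}e^{-2\beta|b|}$, still dominate $e^{-\beta(|a|+|b|)}$, so that the sum against the counting weights remains $O(\beta^{-6})$ -- after which the numerical constant (possibly different from $2^{16}$, which is harmless since the paper makes no attempt to optimize it) follows from \eqref{eq-1e}.
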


Let us start with a simple result:

\begin{lemma}\label{lemma: all} Let $A$ be a bounded operator on $\ell^2(\ZZ^2,\CC^d)$ with $|A(\bfx,\bfy)| \leq e^{-2\beta |\bfx-\bfy|}$. Then 
    \begin{equation}\label{eq-1x}
        \big| [A, \Lambda_1](\bfx, \bfy) \big|\leq e^{-2\beta|x_1| - 2\beta|y_1| - 2\beta|x_2 - y_2|}.
    \end{equation}
If moreover $|A(\bfx,\bfy)| \leq \varepsilon$ for $(\bfx,\bfy) \in B_{2r}(0)$, then  
     \begin{equation}\label{eq-1y}
        \big| [A, \Lambda_1](\bfx, \bfy) \big| \leq \varepsilon^{1/2} e^{-\beta|x_1| - \beta|y_1| - \beta|x_2 - y_2|}, \qquad \bfx,\bfy \in B_{2r}(0).
    \end{equation}
\end{lemma}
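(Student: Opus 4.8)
The plan is to compute the kernel of the commutator $[A,\Lambda_1]$ explicitly and then bound it. Recall $\Lambda_1$ is the indicator of $\NN \times \ZZ$, i.e.\ multiplication by $\indic_{x_1 \geq 0}$. Thus
\begin{equation}
    [A,\Lambda_1](\bfx,\bfy) = A(\bfx,\bfy)\big(\indic_{y_1 \geq 0} - \indic_{x_1 \geq 0}\big),
\end{equation}
and the scalar prefactor is nonzero only when $x_1$ and $y_1$ lie on opposite sides of $0$ (one strictly negative, the other nonnegative). In that case $|x_1| + |y_1| = |x_1 - y_1|$, since $x_1, y_1$ have opposite signs. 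So on the support of the prefactor we have the exact identity $|x_1-y_1| = |x_1| + |y_1|$, which is the whole point.

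For \eqref{eq-1x}, I would start from the hypothesis $|A(\bfx,\bfy)| \leq e^{-2\beta|\bfx-\bfy|}$ and write $|\bfx-\bfy| = |x_1-y_1| + |x_2-y_2|$. On the support of the commutator this equals $|x_1| + |y_1| + |x_2-y_2|$, giving the bound $e^{-2\beta|x_1| - 2\beta|y_1| - 2\beta|x_2-y_2|}$ directly; off the support the left side is $0$, so the inequality is trivially true. This first estimate is essentially immediate once the opposite-sign observation is in hand.

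For the refined estimate \eqref{eq-1y}, the idea is to interpolate between the two available bounds on $A$: the pointwise decay $|A(\bfx,\bfy)| \leq e^{-2\beta|\bfx-\bfy|}$ and the smallness $|A(\bfx,\bfy)| \leq \varepsilon$ valid for $\bfx,\bfy \in B_{2r}(0)$. Writing the exponential factor as a product of its square roots, I would bound one copy of $|A|^{1/2}$ by $\varepsilon^{1/2}$ (using the smallness, since we are restricting to $\bfx,\bfy \in B_{2r}(0)$) and the other copy of $|A|^{1/2}$ by $e^{-\beta|\bfx-\bfy|}$ (the square root of the decay bound). This yields $|A(\bfx,\bfy)| \leq \varepsilon^{1/2} e^{-\beta|\bfx-\bfy|}$ on $B_{2r}(0)^2$, and then the same opposite-sign identity $|x_1-y_1| = |x_1|+|y_1|$ on the commutator support converts this into $\varepsilon^{1/2} e^{-\beta|x_1| - \beta|y_1| - \beta|x_2-y_2|}$, exactly as claimed.

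There is no serious obstacle here; the only point requiring a moment of care is verifying that the commutator's scalar prefactor $\indic_{y_1 \geq 0} - \indic_{x_1 \geq 0}$ is supported precisely where $x_1,y_1$ have opposite signs, so that the triangle-inequality decomposition $|x_1-y_1| = |x_1| + |y_1|$ holds with equality rather than merely as an inequality. One should also confirm that the interpolation step for \eqref{eq-1y} legitimately stays within $B_{2r}(0)$, which is guaranteed because \eqref{eq-1y} is only asserted for $\bfx,\bfy \in B_{2r}(0)$, matching the domain where the smallness hypothesis applies.
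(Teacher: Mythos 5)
Your proposal is correct and follows essentially the same route as the paper: identify the support of the prefactor $\Lambda_1(\bfy)-\Lambda_1(\bfx)$ as the set where $x_1,y_1$ have opposite signs, use the resulting equality $|x_1-y_1|=|x_1|+|y_1|$ there to get \eqref{eq-1x}, and obtain \eqref{eq-1y} by taking the geometric mean of the $\varepsilon$-smallness and the exponential decay. The only (immaterial) difference is that you interpolate the two bounds at the level of $A$ before multiplying by the prefactor, whereas the paper interpolates the two bounds on the commutator itself.
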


\begin{proof} The kernel of $[A, \Lambda_1]$ is 
\begin{equation}
    [A, \Lambda_1](\bfx, \bfy) = A(\bfx,\bfy)(\Lambda_1(\bfy) - \Lambda_1(\bfx)).
\end{equation}
We note that $|\Lambda_1(\bfy) - \Lambda_1(\bfx)| = 0$ if $x_1, y_1$ are both positive or both negative; and it is at most $1$ otherwise, that is if $x_1y_1 \leq 0$. Therefore, we have the bound
\begin{equation}
    \big| [A, \Lambda_1](\bfx, \bfy)\big| \leq C e^{-2\beta |\bfx-\bfy|} \mathbb{1}_{x_1 y_1 \leq 0}.
\end{equation}
Whenever $x_1y_1 \leq 0$, we have
\begin{equation}
    |\bfx-\bfy| = |x_1-y_1| + |x_2-y_2| = |x_1| + |y_1| + |x_2-y_2|.
\end{equation}
It follows that 
\begin{equation}
    \big| [A, \Lambda_1](\bfx, \bfy)  \big| \leq e^{-2\beta |x_2-y_2|-2\beta|x_1|-2\beta|y_1|}.
\end{equation}
This completes the proof of \eqref{eq-1x}. To prove \eqref{eq-1y}, we recall that $|\Lambda_1(\bfy) - \Lambda_1(\bfx)| \leq 1$; hence $\big| [A, \Lambda_1](\bfx, \bfy) \big| \leq \varepsilon$. It suffices then to interpolate this bound with \eqref{eq-1x}.
\end{proof}

For the proof of Proposition \ref{prop:3}, we will use the following inequality:
for $\beta \in (0,1]$, $\bfx, \bfw\in \ZZ^2$, 
\begin{equation}\label{eq-1r}
    \sum_{\bfy,\bfz \in \ZZ^2} e^{-2\beta |\bfx-\bfy| - 2\beta |y_2 - z_2| -2\beta |y_1| - 2\beta |z_1| - 2\beta |z_1 - w_1| - 2\beta |z_2| - 2\beta |w_2|} \leq \left( \dfrac{4}{\beta} \right)^4 e^{-\beta|\bfx|-\beta|\bfw|}.
\end{equation}
We refer to Appendix \ref{App} for a proof.

\begin{proof}[Proof of Proposition \ref{prop:3}] 1. By a scaling argument, we can assume that $C=1$. We first control the kernel of $B$: 
\begin{equation}
    |B(\bfx,\bfw)| = \left| \sum_{y,z \in \ZZ^2} A_0(\bfx,\bfy)B_1(\bfy,\bfz)B_2(\bfz,\bfx) \right|,
\end{equation}
where $B_j = [A_j,\Lambda_j]$. We control the kernels of $A_0$, $B_1, B_2$ using assumption $(i)$ and \eqref{eq-1x}. It yields:
\begin{align}
    |B(\bfx,\bfw)| & = \left| \sum_{\bfy,\bfz \in \ZZ^2} A_0(\bfx,\bfy)B_1(\bfy,\bfz)B_2(\bfz,\bfw) \right|
    \\
    & \leq \sum_{\bfy,\bfz \in \ZZ^2} e^{-2\beta |\bfx-\bfy| - 2\beta |y_2-z_2| -2\beta |y_1| - 2\beta |z_1| - 2\beta |z_1-w_1| - 2\beta |z_2| - 2\beta |w_2|} 
    \\
    & \leq \left( \dfrac{4}{\beta} \right)^4 e^{-\beta|\bfx|-\beta|\bfw|}.\label{eq-1s}
\end{align}
Thus $|B(\bfx,\bfw)|$ decays exponentially, hence $B$ is trace-class; moreover 
\begin{align} 
    \Tr(B) \leq  \sum_{\bfx \in \ZZ^2} |B(\bfx,\bfx)| \leq & \sum_{\bfx,\bfy,\bfz \in \ZZ^2} e^{- 2\beta |\bfx-\bfy| - 2\beta |y_2 - z_2| - 2\beta |y_1| - 2\beta |z_1| - 2\beta |z_1 - x_1| - 2\beta |z_2| - 2\beta |x_2|}\\
    := & \sum_{\bfx,\bfy,\bfz \in \ZZ^2} f(2\beta, \bfx, \bfy,\bfz). \label{eq-1t}
\end{align}

2. We now split the sum in \eqref{eq-1t} in two pieces: $|\bfx| \geq r$ and $|\bfx| \leq r$. Thanks to \eqref{eq-1r} and \eqref{eq-1e}, we have 
\begin{equation}
    \sum_{|\bfx| \geq r} f(2\beta, \bfx, \bfy,\bfz) \leq \left( \dfrac{4}{\beta} \right)^4 \sum_{|\bfx| \geq r} e^{-2\beta|\bfx|} \leq \left( \dfrac{4}{\beta} \right)^4 \dfrac{8}{\beta^2} e^{-\beta r} = \frac{2^{11}}{\beta^6}e^{-\beta r}.
\end{equation}
We focus below on $|\bfx| \leq r$. 

3. If $k=0$ in (ii), then we split the sum in \eqref{eq-1t} according to $|\bfy| \geq 2r$ and $|\bfy| \leq 2r$. In the former case, $|\bfx-\bfy| \geq r$. Therefore, when $|\bfx| \leq r, \ |\bfy| \geq 2r$, we deduce that
\begin{align}
    \big| A_0(\bfx,\bfy) \big| & \leq e^{-2\beta|\bfx-\bfy|} \leq  e^{-\beta r -\beta|\bfx-\bfy|},
  \\
  \big| A_0(\bfx,\bfy)B_1(\bfy,\bfz)B_2(\bfz,\bfx) \big| & \leq e^{-\beta r} f(\beta,\bfx,\bfy,\bfz). \label{eq-1v}
\end{align}
If now $|\bfy| \leq 2r$ (and $|\bfx| \leq r \leq 2r$), then we can use (ii). Interpolating with (i) gives, for $|\bfx| \leq r, \ |\bfy| \leq 2r$:
\begin{align}
    \big| A_0(\bfx,\bfy) \big| & \leq \varepsilon^{1/2} e^{-\beta|\bfx-\bfy|}, 
    \\
  \big| A_0(\bfx,\bfy)B_1(\bfy,\bfz)B_2(\bfz,\bfx) \big| & \leq \varepsilon^{1/2} f(\beta,\bfx,\bfy,\bfz).\label{eq-1u}
\end{align}
Summing the bounds \eqref{eq-1v} and \eqref{eq-1u} produces:
\begin{align}
    \sum_{\substack{|\bfx| \leq r, \\ \bfy, \bfz \in \ZZ^2}} \big| A_0(\bfx,\bfy)B_1(\bfy,\bfz)B_2(\bfz,\bfx) \big| & \leq e^{-\beta r} \sum_{\substack{ |\bfx| \leq r, \\ |\bfy| \geq 2r, \bfz \in \ZZ^2}} f(\beta,\bfx,\bfy,\bfz) + \varepsilon^{1/2} \sum_{\substack{|\bfx| \leq r, \\ |\bfy| \leq 2r, \bfz \in \ZZ^2}} f(\beta,\bfx,\bfy,\bfz)
    \\
    & \leq \left(e^{-\beta r} + \varepsilon^{1/2} \right) \sum_{\bfx,\bfy,\bfz \in \ZZ^2}  f(\beta,\bfx,\bfy,\bfz) 
    \\
    & \leq  \frac{2^{16}}{\beta^6} \left(e^{-\beta r} + \varepsilon^{1/2} \right) . \label{eq-1w}
\end{align}
where we used \eqref{eq-1r} and \eqref{eq-1e} to get
\begin{equation}
    \sum_{\bfx,\bfy,\bfz \in \ZZ^2} f(\beta,\bfx,\bfy,\bfz) \leq  \left(\frac{8}{\beta}\right)^4 \sum_{\bfx \in \ZZ^2} e^{-\beta|\bfx|} \leq \left(\frac{8}{\beta}\right)^4 \frac{16}{\beta^2} = \frac{2^{16}}{\beta^6}. \label{eq-1z}
\end{equation}

4. We now work on $k=1$. We split the sum in $\bfy, \bfz \in B_{2r}(0)$ and $\bfy$ or $\bfz$ outside $B_{2r}(0)$. In the latter case, either $|\bfx-\bfy| \geq r$ or $|\bfz-\bfx| \geq r$. So either 
\begin{equation}
    \big|A_0(\bfx,\bfy) \big| \leq e^{-\beta r} e^{-\beta|\bfx-\bfy|}  \text{ or } \big| B_2(\bfz,\bfx) \big| \leq e^{-\beta r}  e^{-\beta|\bfz-\bfx|}.
\end{equation}
In either case, we recover the bound \eqref{eq-1v}. In the former case, we can use \eqref{eq-1y}, and recover \eqref{eq-1u}. Since \eqref{eq-1v} and \eqref{eq-1u} lead to \eqref{eq-1w}, we obtain this bound here as well. 

5. The case $k=2$ follows the same path as $k=0$ . This completes the proof.  
\end{proof}

\subsection{Comparison of bulk conductances} We will use the following result \cite{EGS05}*{Lemma 7(ii)}, which essentially states that the bulk conductance is independent of $\Lambda_1, \Lambda_2$:

\begin{prop}\label{prop:1} Let $H$ be a short-range operator on $\ell^2(\ZZ^2,\CC^d)$ and $\lambda \notin \Sigma(H)$. For any $\bfn$,
\begin{equation}
    \sigma(H,\lambda) = -2\pi i\Tr\left( P_\lambda(H) \big[ [P_\lambda(H),\Lambda_1(\cdot-n_1)], [P_\lambda(H),\Lambda_2(\cdot-n_2)] \big]\right).
\end{equation}
\end{prop}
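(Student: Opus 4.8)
The plan is to prove the stronger statement that the right-hand side is independent of $\bfn$; since for $\bfn = 0$ it reduces to the definition of $\sigma(H,\lambda)$, this gives the claim. Write $P := P_\lambda(H)$; as $\lambda \notin \Sigma(H)$ it has a spectral gap, so by Lemma \ref{lem:1a} its kernel decays exponentially and every trace below is that of a trace-class, corner-localized operator by the estimates behind Proposition \ref{prop:3}. Because $\ZZ^2$ is generated by $e_1 = (1,0)$ and $e_2 = (0,1)$, it suffices to prove invariance under $\bfn \mapsto \bfn + e_1$ (the $e_2$ case being symmetric). Writing $\Lambda := \Lambda_2(\cdot - n_2)$ and $\chi := \Lambda_1(\cdot - n_1) - \Lambda_1(\cdot - n_1 - 1)$, which is the projection onto the single column $\{x_1 = n_1\}$, the increment equals a nonzero multiple of $\Tr\big(P[[P,\chi],[P,\Lambda]]\big)$, and I must show this trace vanishes.

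First I would reduce the triple commutator. With $Q := 1 - P$, the identities $P[P,\chi] = P\chi Q$ and $Q[P,\Lambda] = -Q\Lambda P$ (immediate from $P^2 = P$ and $PQ = QP = 0$) give $P[[P,\chi],[P,\Lambda]] = -P\chi Q\Lambda P + P\Lambda Q\chi P$; expanding $Q = 1 - P$ and using that $\chi$ and $\Lambda$ are commuting multiplication operators makes the terms carrying $\chi\Lambda$ cancel, leaving $\Tr\big(P\chi P\Lambda P - P\Lambda P\chi P\big)$. The mechanism I would exploit is that, summing the diagonal over the trace variable $\bfx$ \emph{first} and using $P^2 = P$,
\[
\sum_{\bfx}(P\chi P\Lambda P)(\bfx,\bfx) = \sum_{z_1 = n_1,\ w_2 \geq n_2} P(\bfz,\bfw)P(\bfw,\bfz) = \sum_{\bfx}(P\Lambda P\chi P)(\bfx,\bfx),
\]
so the two contributions coincide termwise and cancel. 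The subtlety is that this rearrangement requires absolute convergence of the underlying triple sum, which fails because $\chi$ lives on the \emph{infinite} column $\{x_1 = n_1\}$ (the sum over $z_2 \in \ZZ$ diverges).

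To make this rigorous I would truncate $\chi$ to the finite-rank projection $\chi_R := \1_{\{x_1 = n_1,\ |x_2 - n_2| \leq R\}}$. For each fixed $R$ the index $\bfz$ ranges over a finite set, so the triple sum $\sum_{\bfx,\bfz,\bfw} |P(\bfx,\bfz)\,P(\bfz,\bfw)\,P(\bfw,\bfx)|$ is genuinely finite (the exponential decay of $P$ handles the sums over $\bfx$ and $\bfw$); Fubini then licenses the interchange above and yields $\Tr\big(P[[P,\chi_R],[P,\Lambda]]\big) = 0$ exactly.

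It remains to pass to $R \to \infty$. The error $\Tr\big(P[[P,\chi],[P,\Lambda]]\big) - \Tr\big(P[[P,\chi_R],[P,\Lambda]]\big) = \Tr\big(P[[P,\chi - \chi_R],[P,\Lambda]]\big)$ involves $\chi - \chi_R$, supported on the far column $\{x_1 = n_1,\ |x_2 - n_2| > R\}$, which lies at distance $> R$ from the line $\{x_2 = n_2\}$ on which $[P,\Lambda]$ is concentrated. Bounding all kernels in absolute value by Lemma \ref{lem:1a} and summing with \eqref{eq-1e} and \eqref{eq-1r}, every chain of kernels contributing to this trace must bridge an $x_2$-gap of length $> R$, so the error is $\leq C\,e^{-cR}$ with $c > 0$ governed by the spectral gap, hence tends to $0$. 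Therefore $\Tr\big(P[[P,\chi],[P,\Lambda]]\big) = 0$, which proves the invariance. I expect the main obstacle to be exactly this interplay: the naive manipulations pass through operators such as $P\chi P$ that are not trace class, so the argument depends on isolating the genuinely trace-class, corner-localized combination and showing that replacing the infinite column by a finite segment perturbs the trace only exponentially little.
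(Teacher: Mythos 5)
Your proposal is correct, but there is nothing in the paper to compare it against: the paper does not prove Proposition \ref{prop:1} at all — it imports it verbatim from \cite{EGS05}*{Lemma 7(ii)}. So your argument is necessarily a different route, namely a self-contained proof whose ingredients all exist inside the paper: the kernel decay of $P_\lambda(H)$ (Lemma \ref{lem:1a}), the switch-function bound \eqref{eq-1x}, and the summation estimates \eqref{eq-1e}, \eqref{eq-1r}. Your two key steps are exactly right. First, the exact identity $P[[P,\chi],[P,\Lambda]] = P\chi P\Lambda P - P\Lambda P\chi P = [P\chi P, P\Lambda P]$, whose trace is only \emph{formally} zero: cyclicity is illegitimate here because $P\chi P$ is not trace class — and it had better be illegitimate, since the same formal argument with $\Lambda_1$ in place of $\chi$ would force $\sigma(H,\lambda)=0$ identically, contradicting the Haldane computation in Section 4. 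Second, the repair by finite-rank truncation: for $\chi_R$ the trace vanishes by legitimate cyclicity (equivalently, your Fubini argument with $\sum_{\bfx} P(\bfw,\bfx)P(\bfx,\bfz) = P(\bfw,\bfz)$), while the error $\Tr\big(P[[P,\chi-\chi_R],[P,\Lambda]]\big)$ is $O(e^{-cR})$ because the far column lies at distance greater than $R$ from the line $\{x_2 = n_2\}$ near which $[P,\Lambda]$ is exponentially concentrated. One caveat worth making explicit in a write-up: that error estimate must be run on the commutator form $P[P,\chi-\chi_R][P,\Lambda] - P[P,\Lambda][P,\chi-\chi_R]$, and not on the reduced form $P(\chi-\chi_R)P\Lambda P - P\Lambda P(\chi-\chi_R)P$, since each reduced term separately has a divergent diagonal sum along the part of the column lying above the line; your phrasing that every contributing chain of kernels must bridge an $x_2$-gap of length greater than $R$ implicitly does exactly this, and your closing sentence shows you understand why. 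Relative to the paper's citation, your approach buys self-containedness and a quantitative, exponentially small error bound in the same spirit as the paper's Proposition \ref{prop:3}.
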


We are now ready to prove Proposition \ref{prop:2}.

\begin{proof}[Proof of Proposition \ref{prop:2}] 1. For simplicity, use the notation $P_j = P_\lambda(H_j)$. Let $T$ be the translation by $\bfn$: $T u (\cdot)= u(\cdot - \bfn)$. We have $\Lambda_j(\cdot - n_j) = T \Lambda_j(\cdot) T^*$ and $T^* P_\lambda(H) T = P_\lambda(T^* HT)$. Using these and Proposition \ref{prop:1}, as well as the cyclicity of the trace, we obtain
\begin{align}
    \sigma(H_j,\lambda) & = -2\pi i\Tr \big( P_j \big[ [P_j, T \Lambda_1 T^*], [P_j, T \Lambda_2 T^*] \big] \big)
    \\
    & = -2\pi i\Tr \big(  P_j  \big[ T [ T^* P_j T,  \Lambda_1 ] T^*, T[T^*  P_j T,   \Lambda_2] T^* \big] \big)
    \\
    & = -2\pi i\Tr \big(  P_j   T \big[ [ T^*P_j T,  \Lambda_1 ] , [T^*  P_j T,   \Lambda_2] \big] T^* \big) 
    \\
    & = -2\pi i\Tr \big(  T^* P_j   T \big[ [ T^* P_j T,  \Lambda_1 ] , [T^*  P_j T,   \Lambda_2] \big]  \big) = \sigma(T^* H_j T,\lambda).  
\end{align}
Therefore, by replacing $H_j$ by $T^* H_j T$, we can simply assume that $\bfn = 0$. 

2. Now we write 
\begin{equation}\label{eq-2a}
    \sigma(H_1,\lambda)-\sigma(H_2,\lambda) = T(P_1-P_2,P_1,P_1) + T(P_2,P_1-P_2,P_1) + T(P_2,P_2,P_1-P_2), 
\end{equation}
where $T(A_0,A_1,A_2)$ is the trilinear form
\begin{equation}
    T(A_0,A_1,A_2) = -2\pi i \Tr\left( A_0 [A_1,\Lambda_1] [A_2,\Lambda_2]\right) + 2\pi i \Tr\left( A_0 [A_2,\Lambda_2] [A_1,\Lambda_1]\right). 
\end{equation}

From Lemmas \ref{lem:1a} and \ref{lem:1b}, we have (for a constant $C_\nu$ depending on $\nu$ only):
\begin{align}
    \big| P_j(\bfx,\bfy) \big| & \leq \dfrac{C_\nu}{\delta} e^{-\delta |\bfx-\bfy|/C_\nu},  \qquad j=1,2;
    \\
    \big| P_1(\bfx,\bfy) - P_2(\bfx,\bfy) \big| & \leq \dfrac{C_\nu}{\delta^6} \left( e^{-\delta |\bfx-\bfy|/C_\nu} + \epsilon\right), \qquad \bfx,\bfy \in B_{2r}(\bfn).
\end{align}
Therefore, the triplets $(A_0,A_1,A_2)$ involved in \eqref{eq-2a} satisfy the assumptions of Proposition \ref{prop:3}, with constants
\begin{equation}
    C = \dfrac{C_\nu}{\delta}, \quad \beta = \dfrac{\delta}{2 C_\nu}, \quad \varepsilon = \dfrac{1}{\delta^5} \left( e^{-\delta r / C_\nu} + \epsilon\right).
\end{equation}
So, we deduce that
\begin{equation}
    \big|\sigma(H_1,\lambda))-\sigma(H_2,\lambda)\big| \leq \dfrac{C_\nu}{\delta^{3+6}} \left( e^{\frac{\delta r}{2C_\nu}} + \dfrac{1}{\delta^{5/2}} (e^{-\frac{\delta r}{C_\nu}} + \epsilon)^{1/2} \right) \leq \dfrac{C_\nu}{\delta^{12}} \left( e^{-\frac{\delta r}{2C_\nu}} + \epsilon^{1/2} \right).
\end{equation}
This completes the proof of Proposition \ref{prop:2}.
\end{proof}

\section{Violation of the bulk-edge correspondence in strips}

In this section, we show that topological insulators lying within strips do not necessarily support edge states along their boundary; this means that geometrically, $\Omega$ needs to be unbounded in all directions for the bulk spectral gaps to systematically fill. 

Specifically, for any $L > 0$, %{\xz actually for any $\Omega$ we can construct.} 
we construct an edge operator $H_e$ satisfying Assumption \ref{a2} with:
\begin{itemize}
    \item The bulk operators $H_\pm$ are insulating at energy $0$, with bulk conductance $\mp 1$;
    \item $\Omega \subset \ZZ \times [-L,L]$, in particular $\Fr(\Omega) \leq L$; 
    \item $0 \notin \Sigma(H_e)$: the bulk gap did not fully close.
\end{itemize}
Hence, although the bulk operators $H_\pm$ represent topologically distinct topological phases, the interface $\partial \Omega$ does not support conducting states for $H_e$. In particular, a material made of topologically distinct insulators across $\partial \Omega$, $\Omega = \NN \times [-L,L]$, violates the bulk-edge correspondence. This was suspected by G.M. Graf, but the problem was left open in an online talk by G.C. Thiang \cite{GT20}.

\subsection{Haldane model} Our bulk operators are based on Haldane's model \cite{H88}, which we review briefly. For each site of the honeycomb lattice, the Haldane Hamiltonian models tunneling to the three nearest neighbors and \textit{complex} coupling (for instance induced by a periodic magnetic field) to the six second-nearest neighbors, see Figure \ref{fig:4}. We will use a version based on the $\ZZ^2$-lattice (which only differ from the standard honeycomb version by a linear change of variable):
\[
    H_\pm = H_0 \pm S
\]
where $H_0, S$ are selfadjoint, short-range Hamiltonians on $\ell^2(\ZZ^2,\CC^d)$ given by:
\[
\begin{split}
&(H_0\psi)_n = \begin{bmatrix}
    \psi_n^B + \psi_{n - e_1}^B + \psi_{n - e_2}^B\\
    \psi_n^A + \psi_{n + e_1}^A + \psi_{n + e_2}^A
\end{bmatrix},\\
&(S\psi)_n = is\begin{bmatrix}
    \psi_{n + e_1}^A - \psi_{n - e_1}^A + \psi_{n - e_2}^A - \psi_{n + e_2}^A + \psi_{n + e_2 - e_1}^A - \psi_{n + e_1 - e_2}^A\\
    -\psi_{n + e_1}^B + \psi_{n - e_1}^B - \psi_{n - e_2}^B + \psi_{n + e_2}^B - \psi_{n + e_2 - e_1}^B + \psi_{n + e_1 - e_2}^B
\end{bmatrix}.
\end{split}
\]
The parameter $s$ above quantifies the ratio between first and second-nearest neighbor coupling. We restrict it to $(0,1]$ here.

\begin{figure}
    \centering
    \begin{minipage}{0.4\textwidth}
        \includegraphics[width=0.8\textwidth]{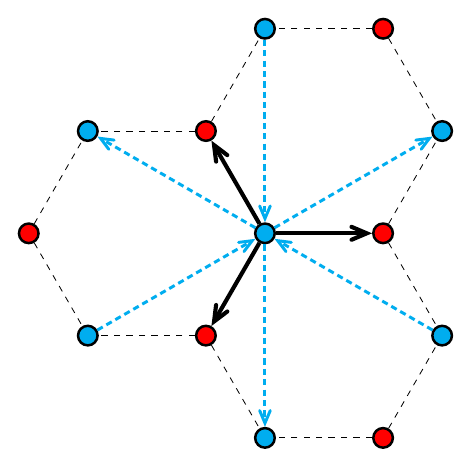}
  \end{minipage}
    \begin{minipage}{0.55\textwidth}
        \centering
        \caption{The black arrows represent tunneling to the three nearest neighbors; the dashed blue arrows represent complex coupling to six second-nearest neighbors in the Haldane model \cite{H88}.}
        \label{fig:side_by_side_p2}
        \label{fig:4}
    \end{minipage}
\end{figure}

As a result, the discrete Fourier transform w.r.t. $\ZZ^2$ is
\[
\begin{split}
\hat{H}_\pm(\xi) & =  \begin{bmatrix}
    \pm 2s\eta(\xi)  & \overline{\omega(\xi)}\\
    \omega(\xi) & \mp  2s \eta(\xi)
\end{bmatrix}, \qquad \xi \in [-\pi, \pi]^2,
\\
\omega(\xi) & := 1+e^{i\xi_1} + e^{i\xi_2}, \\
\eta(\xi) & := \sin(\xi_1) - \sin(\xi_2)  + \sin(\xi_2 - \xi_1).
\end{split}
\]
The eigenvalues of $\hat{H}_\pm(\xi)$ are $\pm \lambda(\xi)$ with
\begin{equation}\label{eq-1b}
    \lambda(\xi) = \sqrt{(2s\eta(\xi))^2 + |\omega(\xi)|^2}.
\end{equation}
The functions $\eta$ and $\omega$ do not vanish simultaneously; therefore $\lambda$ never vanishes and the Hamiltonians $H_\pm$ are insulating at energy $0$:  $0 \notin \Sigma(H_\pm)$, equivalently $H_\pm$ are invertible. Because of translational invariance, it is known their bulk conductance equals the Chern number of their low energy eigenbundle (see e.g. \cite{S15}*{Equations (13) and Corollary 8.4.4}) and we have $\sigma(H_\pm; 0) = \mp 1$, see \cite{BH13}*{\S8}. In particular, the insulators described by $H_\pm$ are topologically distinct and $H_\pm$ satisfy Assumption \ref{a1}.

\subsection{Edge operator} 

Fix $L > 0$ and $\Omega \subset [-L, L]\times \NN$, we define the edge operator by 
\[
H_e := H_+ - 2\1_{\Omega^c}S\1_{\Omega^c}. 
\] 
Then formally, $H_e$ is $H_+$ in $\Omega$ and $H_-$ in $\Omega^c$. We prove Theorem \ref{thm:3}, formulated here as:

\begin{prop}\label{prop: He_invertible} $H_e$ satisfies Assumption \ref{a2} at energy $0$; however there exists a numerical constant $\rho_0 > 0$ such that if $0< s \leq \rho_0 L^{-1}$ then $0 \notin \Sigma(H_e)$. 
\end{prop}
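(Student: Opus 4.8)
The statement has two independent parts, which I would handle separately.

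\emph{Assumption \ref{a2}.} Here I would just compute $E$. Write $P=\1_\Omega$, $Q=\1_{\Omega^c}$. Using $H_-=H_+-2S$ together with $H_e=H_+-2QSQ$, the $S$-terms cancel: since $\1_{\Omega^c}H_-\1_{\Omega^c}=QH_+Q-2QSQ$, we get $\1_\Omega H_+\1_\Omega+\1_{\Omega^c}H_-\1_{\Omega^c}=PH_+P+QH_+Q-2QSQ$, so that $E=H_e-PH_+P-QH_+Q+2QSQ=PH_+Q+QH_+P$, the off-diagonal part of $H_+$ for the splitting $\ZZ^2=\Omega\cup\Omega^c$. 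Thus $E(\bfx,\bfy)=H_+(\bfx,\bfy)$ when $\bfx,\bfy$ lie on opposite sides of $\partial\Omega$, and vanishes otherwise. It then remains to note the geometric fact that if $\bfx\in\Omega$, $\bfy\in\Omega^c$, the nearest-neighbor path from $\bfx$ to $\bfy$ has a last point $\bfx'\in\Omega$; its successor lies in $\Omega^c$, so $B_1(\bfx')\not\subset\Omega$, whence $\bfx'\in\partial\Omega$ and $d(\bfx,\partial\Omega)\le|\bfx-\bfx'|\le|\bfx-\bfy|$. The short-range bound \eqref{eq-1c} for $H_+$ then gives $|E(\bfx,\bfy)|\le\nu^{-1}e^{-2\nu|\bfx-\bfy|}\le\nu^{-1}e^{-2\nu d(\bfx,\partial\Omega)}$, which is \eqref{eq-1a}.

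\emph{Invertibility.} I would pass to the chiral $A/B$ block decomposition. The symbols give $H_0=\bigl[\begin{smallmatrix}0&W^*\\W&0\end{smallmatrix}\bigr]$ (with $W$ the multiplier of symbol $\omega$) and $S=\bigl[\begin{smallmatrix}M&0\\0&-M\end{smallmatrix}\bigr]$ (with $M$ the selfadjoint multiplier of symbol $2s\eta$), so that, since $\1_{\Omega^c}$ acts as the scalar spatial projection $q=\1_{\Omega^c}$ on both sublattices, $H_e=\bigl[\begin{smallmatrix}N&W^*\\W&-N\end{smallmatrix}\bigr]$ with $N:=M-2qMq$ selfadjoint. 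Because $\|H_e\psi\|^2=\langle\psi,H_e^2\psi\rangle$, invertibility is equivalent to the quadratic-form bound $H_e^2\ge c\,s^2$, and a direct computation (using that $W$ is a normal multiplier, $W^*W=WW^*=|\omega|^2$) yields
\[
H_e^2=\begin{bmatrix}N^2+|\omega|^2 & [N,W^*]\\ [W,N] & N^2+|\omega|^2\end{bmatrix}.
\]
The positive backbone is the bulk term: writing $N=M+(N-M)$ gives $N^2+|\omega|^2=\lambda^2+(N^2-M^2)$, where $\lambda^2=|\omega|^2+4s^2\eta^2\ge 4\kappa s^2$ for $s\le\tfrac12$, with $\kappa:=\min_\xi(|\omega|^2+\eta^2)>0$ (positive since $\omega,\eta$ never vanish simultaneously). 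Both error contributions are geometrically confined by $\Omega\subset\{|x_1|\le L\}$ and the finite range of $M,W$: writing $N=-M+2(pM+Mp-pMp)$ with $p=\1_\Omega$ shows $N^2-M^2$ is supported within $O(1)$ of the strip, of norm $O(s^2)$, while $[W,N]=-2([W,q]Mq+qM[W,q])$ is supported within $O(1)$ of $\partial\Omega$ and has norm $O(s)$. To turn confinement into smallness I would invoke the Combes--Thomas estimate (Proposition \ref{prop: CTthm}) for the \emph{gapped} bulk operators $H_\pm$ (with which $H_e$ coincides away from $\partial\Omega$) to show that any near-null state decays away from $\partial\Omega$ at rate $\sim s$, hence is confined in $x_1$ to a window of width $\sim 1/s$ and carries only an $O(sL)$ fraction of its mass in the width-$L$ strip.

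The main obstacle is the off-diagonal commutator term $2\,\Re\langle[N,W^*]u,v\rangle$ (with $\psi=(u,v)$): it is of order $s^2$, the \emph{same} order as the bulk lower bound $4\kappa s^2$, so a crude Cauchy--Schwarz estimate is hopeless --- indeed the exact identity $\|H_e\psi\|^2=\|Nu+W^*v\|^2+\|Wu-Nv\|^2\ge0$ is asymptotically saturated by the genuine near-zero modes that appear for wide strips, so all the smallness must come from the thinness $sL\le\rho_0$. I expect the decisive step to be a weighted (IMS-type) localization isolating the boundary layer of $\partial\Omega$, in which the $x_1$-confinement to a window of width $1/s\gg L$ supplies the extra factor $sL$ needed to absorb the commutator into the bulk term, giving $\|H_e\psi\|^2\ge(4\kappa-C\rho_0)s^2\|\psi\|^2$ and hence $0\notin\Sigma(H_e)$ once $\rho_0$ is a small numerical constant. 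Estimating this boundary term \emph{sharply}, rather than in operator norm, is the crux of the argument.
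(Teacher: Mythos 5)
Your first part is correct and is essentially the paper's own computation: $E=\1_\Omega H_+\1_{\Omega^c}+\1_{\Omega^c}H_+\1_\Omega$, which is supported near $\partial\Omega$; your geodesic-path observation that $d(\bfx,\partial\Omega)\le|\bfx-\bfy|$ when $\bfx\in\Omega$, $\bfy\in\Omega^c$ is in fact a cleaner justification of \eqref{eq-1a} than the paper's finite-range remark.

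The second part, however, has a genuine gap, and you have named it yourself: everything hinges on the ``decisive step'' that you only \emph{expect} to hold. You never prove the boundary estimate that would absorb the commutator $[W,N]$ (which is $O(s)$ in operator norm, hence a priori \emph{larger} than the $O(s^2)$ bulk lower bound, not of the same order as you state) into the diagonal term; without it the form bound $H_e^2\ge cs^2$ is simply not established. Moreover, the mechanism you propose for filling the gap cannot work as described. Combes--Thomas (Proposition \ref{prop: CTthm}) applied to the gapped bulk operators yields an \emph{upper} bound on the extent of a near-null state of $H_e$: it must decay away from $\partial\Omega$ at rate $\sim s$, hence live within $O(1/s)$ of the interface. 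It gives no \emph{lower} bound on how spread out such a state must be. A state concentrated in an $O(1)$-neighborhood of $\partial\Omega$ is consistent with every estimate you invoke, and such a state carries essentially \emph{all} of its mass inside the strip, so the claimed ``$O(sL)$ fraction of mass in the strip'' does not follow. Ruling out precisely this concentration is the uncertainty-principle content of the theorem, and it is the one ingredient your argument does not supply.

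The paper supplies it by Fourier analysis rather than by localization. Lemma \ref{lemma: H_+_inv_lower_bound} shows that if $\supp u$ lies in a strip of width $L$ then $\Vert H_+^{-1}u\Vert_{\ell^2}\le C_0L^{1/3}s^{-2/3}\Vert u\Vert_{\ell^2}$, far better than the naive $\sim 1/s$ coming from the spectral gap: on the strip, $\hat u(\xi_1,\cdot)$ is a trigonometric polynomial with only $\sim L$ terms, so by Cauchy--Schwarz it can carry only mass $O(\delta L)\Vert u\Vert^2$ on the $\delta$-neighborhood of the Dirac points $\xi^*_\pm$, while away from them $\lambda(\xi)\gtrsim\delta$ by Lemma \ref{lemma: lower_bound_of_lambda}; optimizing $\delta\sim(s^2/L)^{1/3}$ gives the stated bound. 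With this in hand no chiral decomposition, form bound, or IMS partition is needed: the perturbation relating the invertible bulk operator to $H_e$ has norm $O(s)$ and produces functions supported in (a neighborhood of) the strip, so a Neumann series gives invertibility as soon as $C_0 (sL)^{1/3}$ is small, i.e.\ $s\le\rho_0 L^{-1}$. If you wish to salvage your quadratic-form framework, the missing sharp boundary estimate would have to be derived from exactly this kind of strip-resolvent bound --- at which point the form argument becomes redundant.
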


This implies that an interface lying in a strip, between two topologically distinct insulating phases does not necessarily fill the bulk spectral gap. 

\begin{remark} A general argument implies that the edge conductance of $H_e$ across the $x_2$-axis is $0$. Indeed, this conductance is stable under perturbations within strips orthogonal to the $x_2$-axis, such as $2\1_{\Omega^c}S\1_{\Omega^c}$; so it is equal to that of $H_e + 2\1_{\Omega^c}S\1_{\Omega^c} = H_+$, which is $0$. 

To the best of our knowledge, there is no general argument that implies that $H_e$ has edge conductance across the $x_1$-axis equal to $0$. For $0 < s \leq \rho_0 L^{-1}$, it is a consequence of Proposition \ref{prop: He_invertible}: $H_e$ has no state with energy near $0$. For $\Omega = \NN \times [-L,L]$, this implies that no quantum particle may travel from one end of $\partial \Omega$ to the other with high probability. 
\end{remark}

Proposition \ref{prop: He_invertible} is a consequence of the uncertainty principle: a function localized in frequency may not be localized in position. For $s$ small, the Fourier transforms of $H_\pm$ have eigenvalues of order $1$, unless $\xi$ is near the zeroes $\xi_\pm^* = \pm 2\pi/3 (1,-1)$ of $\omega(\xi)$ -- in which case they are of order $s$. Therefore, a $O(s)$-perturbation (such as $S$) may not close the gap unless it generates states for $H_e$ that are concentrated in frequency near $\xi_\pm^*$. By (a taylored version of) the uncertainty principle, such states may not be localized within a strip (such as $\Omega$).

\subsection{Proof} We prove Proposition \ref{prop: He_invertible} here. We will need the following lemmas:

\begin{lemma}\label{lemma: lower_bound_of_lambda}  
    \begin{enumerate}
        \item  There exists $\lambda_0>0$, such that 
        \[
        \lambda(\xi)\geq \lambda_0 \cdot s,  \quad \text{~for~all~} \quad \xi\in [-\pi, \pi]^2, \quad s\in (0,1].
        \]
        \item There exists $\mu_0>0$ such that
        \[
        |\omega(\xi)|\geq \mu_0 \cdot d(\xi, \{ \xi_\pm^*\}), \quad \text{~for~all~} \quad \xi\in [-\pi, \pi]^2.
        \]
    \end{enumerate}
\end{lemma}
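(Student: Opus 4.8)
The plan is to prove both lower bounds by elementary analysis, exploiting the fact that we are working on the compact domain $[-\pi,\pi]^2$ where continuous positive functions attain positive minima.

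\textbf{Part (1).} I want to show $\lambda(\xi) \geq \lambda_0 s$ uniformly. Recall from \eqref{eq-1b} that $\lambda(\xi)^2 = (2s\eta(\xi))^2 + |\omega(\xi)|^2$, so $\lambda(\xi) \geq \max(2s|\eta(\xi)|, |\omega(\xi)|)$. The key observation, already noted in the text, is that $\eta$ and $\omega$ never vanish simultaneously. The natural strategy is to split into two regimes according to whether $\xi$ is near the common zeros $\xi_\pm^*$ of $\omega$. First I would verify that $\omega(\xi_\pm^*) = 0$ by direct computation: $1 + e^{\pm 2\pi i/3} + e^{\mp 2\pi i/3} = 1 + 2\cos(2\pi/3) = 0$, and that these are the only zeros in $[-\pi,\pi]^2$. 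Next, I would check that $\eta$ does not vanish at $\xi_\pm^*$, so by continuity $|\eta| \geq c_0 > 0$ on a neighborhood $U$ of $\{\xi_\pm^*\}$; there $\lambda(\xi) \geq 2s|\eta(\xi)| \geq 2c_0 s$. On the complement $[-\pi,\pi]^2 \setminus U$, the function $|\omega|$ is continuous and strictly positive on a compact set, hence bounded below by some $c_1 > 0$; there $\lambda(\xi) \geq |\omega(\xi)| \geq c_1 \geq c_1 s$ since $s \leq 1$. Taking $\lambda_0 = \min(2c_0, c_1)$ finishes this part.

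\textbf{Part (2).} Here I want $|\omega(\xi)| \geq \mu_0 \, d(\xi,\{\xi_\pm^*\})$. This is a statement that $\omega$ vanishes only to first order at $\xi_\pm^*$, i.e. a non-degeneracy (transversality) claim. The cleanest approach is to compute the gradient (as a map to $\CC \cong \RR^2$) of $\omega$ at the two zeros and check it is nonsingular, so that $|\omega(\xi)|$ grows linearly near each $\xi_\pm^*$; then a compactness argument upgrades this local bound to the claimed global one. Concretely, near $\xi_\pm^*$ a first-order Taylor expansion gives $|\omega(\xi)| \geq c_\pm |\xi - \xi_\pm^*|$ for $|\xi - \xi_\pm^*|$ small, provided the real Jacobian of $(\Re\omega, \Im\omega)$ is invertible at $\xi_\pm^*$ — which I would confirm from $\partial_{\xi_1}\omega = i e^{i\xi_1}$, $\partial_{\xi_2}\omega = i e^{i\xi_2}$, whose values at $\xi_\pm^*$ are linearly independent over $\RR$. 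Away from small balls around $\xi_\pm^*$, both $|\omega(\xi)|$ and $d(\xi,\{\xi_\pm^*\})$ are positive continuous functions on a compact set, so their ratio is bounded below by a positive constant. Combining the local and global bounds and taking $\mu_0$ to be the minimum yields the result.

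\textbf{Main obstacle.} The only genuinely delicate point is the transversality computation in Part (2): I must be certain that $\omega$ has exactly two zeros in $[-\pi,\pi]^2$ and that each is simple (nondegenerate) as a real map, since a higher-order vanishing would give only a superlinear lower bound and break the claimed linear estimate. Locating all zeros of $\omega(\xi) = 1 + e^{i\xi_1} + e^{i\xi_2}$ amounts to solving $\cos\xi_1 + \cos\xi_2 = -1$ together with $\sin\xi_1 + \sin\xi_2 = 0$; the second equation forces $\xi_2 = -\xi_1$ (or $\xi_2 = \pi - \xi_1$, which I would rule out), reducing to $2\cos\xi_1 = -1$ and pinning down $\xi_1 = \pm 2\pi/3$. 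I expect this to be routine but it is where care is required. Everything else reduces to continuity plus compactness of $[-\pi,\pi]^2$.
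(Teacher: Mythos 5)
Your proposal is correct and takes essentially the same approach as the paper: part (1) rests on compactness of $[-\pi,\pi]^2$ together with the fact that $\eta$ and $\omega$ never vanish simultaneously (the paper factors out $s$ via $\lambda(\xi)\geq s\sqrt{(2\eta(\xi))^2+|\omega(\xi)|^2}$ where you split into two regions, a cosmetic difference), and part (2) rests on the invertibility of the real Jacobian $\nabla\omega(\xi_\pm^*)$ plus compactness (the paper packages this as a contradiction along a convergent subsequence, where you argue directly via a Taylor expansion near the zeros and a ratio bound away from them). One small slip to fix in the write-up: the second solution branch of $\sin\xi_1+\sin\xi_2=0$ is $\xi_2=\xi_1+\pi \pmod{2\pi}$, not $\xi_2=\pi-\xi_1$; it is still ruled out because it forces $\cos\xi_1+\cos\xi_2=0\neq-1$, so your conclusion that $\xi_\pm^*$ are the only zeros of $\omega$ stands.
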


%\alexis{There was some problem here. The estimates need to be uniform with respect to $s$. You have to modify (1) to $|\lambda(\xi)| \geq \lambda_0 s$. I checked that it does not create problems below (some things have to be changed but it's ok). The proof of (2) seems incorrect, I don't think you can have $\lambda(\xi) \geq \delta$, probably just $\lambda(\xi) \geq \mu_0 \delta$ for some small enough $\mu_0$. So that's probably yielding corrections around but it should not be too bad. Please check.} {I agree.}

\begin{remark} At the physical level these are well-known bounds; we taylor them here to our needs. Part (1) means that $H_\pm$ have a spectral gap at energy $0$; part (2) means that the Wallace Hamiltonian $H_0$ have a Dirac cone.
\end{remark}

\begin{proof}[Proof of Lemma \ref{lemma: lower_bound_of_lambda}]
(1) From \eqref{eq-1b} and $s \in (0,1]$, we have  
\begin{equation}\label{eq-1d}
    \lambda(\xi) = \sqrt{(2s\eta(\xi))^2 + |\omega(\xi)|^2} \geq s \sqrt{(2\eta(\xi))^2 + |\omega(\xi)|^2}.
\end{equation}

Moreover,
    \[
    \begin{split}
    \omega(\xi) = 0 \quad 
    &\Leftrightarrow \quad  \xi = \xi^*_\pm = \pm\dfrac{2\pi}{3}(1,-1) \quad \Rightarrow \quad \eta(\xi^*_\pm) = \pm \frac{3\sqrt{3}}{2}.
    \end{split}
   \]
   Thus $\omega(\xi)$ and $\eta(\xi)$ cannot vanish simultaneously and $\sqrt{(2\eta(\xi))^2 + |\omega(\xi)|^2}$ never vanishes. By continuity, $\sqrt{(2\eta(\xi))^2 + |\omega(\xi)|^2} \geq \lambda_0$ for some $\lambda_0>0$. This proves (1) by going back to \eqref{eq-1d}.

(2) We first write down $\omega$ as a function valued in $\RR^2$ instead of $\CC$:
  \begin{equation}
      \omega(\xi) = (1 + \cos(\xi_1) + \cos(\xi_2), \sin(\xi_1) + \sin(\xi_2)).
  \end{equation}
 With this notation,
 \[
   \nabla(\omega(\xi)) = \begin{bmatrix}
       -\sin(\xi_1) & -\sin(\xi_2) \\ 
       \cos(\xi_1) & \cos(\xi_2)
       \end{bmatrix}
   , \qquad \nabla(\omega(\xi^*_\pm)) = \dfrac{1}{2}\begin{bmatrix}
       \mp\sqrt{3} & \pm\sqrt{3} \\ 
       -1 & -1
   \end{bmatrix}.
   \]
   As a result, for any $u = (u_1, u_2)^T$, 
   \[
   \begin{split}
       | \nabla\omega(\xi^*_\pm) u|^2 &= \frac{1}{4}[3(u_1 - u_2)^2 + (u_1 + u_2)^2]  = \frac{1}{4}(4u_1^2 + 4u_2^2 - 4u_1u_2) \geq \frac{1}{2}| u|^2.
   \end{split}
   \]
   Assume for any $n$, there is $\xi_n\neq \xi_\pm^*\in [-\pi, \pi]^2$ such that 
   \begin{equation}\label{eq-2f}
       |\omega(\xi_n)|\leq \frac{d(\xi_n, \{\xi_\pm^*\})}{n}. 
   \end{equation}
      By compactness of $[-\pi, \pi]^2$, there exists a subsequence of $\xi_n$ that converges to some $\xi_\infty$. From \eqref{eq-2f} and $d(\xi_n, \{\xi_\pm^*\}) \leq 4\pi$, we deduce $|\omega(\xi_\infty)| = 0$ hence $\xi_\infty$ is either $\xi^*_+$ or $\xi_-^*$. As a result, as $n \rightarrow \infty$,
    \[
    \begin{split}
        \frac{1}{n}\geq \frac{|\omega(\xi_n) - \omega(\xi_\infty)|}{|\xi_n - \xi_\infty|}  &= \frac{|\nabla\omega(\xi_\infty)(\xi_n - \xi_\infty)| + O(|\xi_n - \xi_\infty|^2)}{|\xi_n - \xi_\infty|}\\
        &\geq \frac{1}{\sqrt{2}} + O(|\xi_n - \xi_\infty|)\to \frac{1}{\sqrt{2}}.
    \end{split}
    \]
    We get a contradiction. Thus there is some $\mu_0>0$ such that $|\omega(\xi)|\geq \mu_0|\xi - \xi_\pm^*|$.\end{proof}

   % \alexis{Maybe $|u|^2/2$ on RHS because of the factor $1/4$?}{\xz yes. I'll continue the corrections later}
   % Thus 
   % \[
   % \begin{split}
   %     |\omega(\xi) - \omega(\xi^*_\pm)| &\geq |\nabla(\omega(\xi^*_\pm)(\xi - \xi^*_\pm)| - O(|\xi - \xi^*_\pm|^2)\\
   %     &\geq \sqrt{2}|\xi - \xi^*_\pm| - O(|\xi - \xi^*_\pm|^2)\geq 1|\xi - \xi^*_\pm|
   % \end{split}  
   % \]
   % when $|\xi - \xi^*_\pm|$ is small enough. On the other hand, when $|\xi - \xi^*_\pm|$ is small enough, $|\eta(\xi)|\geq |\eta(\xi^*_\pm)|/(3\sqrt{3}) \geq 1/2$. Thus $\lambda(\xi)\geq \sqrt{(2s\eta(\xi))^2 + |\omega(\xi)|^2}\geq \sqrt{s^2 + \delta^2}$ when $|\xi - \xi^*_\pm|\geq \delta$ if $\delta$ is small enough. \alexis{I don't understand this part of the argument? You have some estimates valid for $|\xi-\xi_*| \leq \delta$, but eventually you want them for $|\xi-\xi_*| \geq \delta$.} {\xz I showed $|\omega(\xi)- \omega(\xi*)|\geq |\xi -\xi*|$ when $|\xi -\xi*|\leq delta_0$ for some $delta_0$. Then, if the latter $\geq\delta$ for delta smaller than delta0, the former also.} \alexis{Yes but that's valid only when $|\xi -\xi_*| \leq \delta_0$... }{\xz Yeah you're right. I see the problem now. We want the conical structure everywhere rather than just near $\xi^*_\pm$.}

    \begin{lemma}\label{lemma: H_+_inv_lower_bound} There exists $C_0 > 0$ such that for all $s \in (0,1]$, $L >0$ and $u \in \ell^2(\ZZ^2,\CC^2)$:
        \[
        \supp u \subset \ZZ\times [-L, L] \qquad \Rightarrow \qquad
        \Vert H_+^{-1} u \Vert_{\ell^2} 
        %\leq \left(\frac{8\delta L}{\lambda_0^2s^2} + \frac{1}{\mu_0^2\delta^2}\right)^{\frac{1}{2}}\Vert u \Vert_{\ell^2} 
         \leq C_0L^{1/3} s^{-2/3}\Vert u \Vert_{\ell^2}.
\]
  \end{lemma}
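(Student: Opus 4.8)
The plan is to work on the Fourier side, where $H_+$ is the multiplication operator by $\hat H_+(\xi)$. Since $\|H_+^{-1}u\|_{\ell^2} = \|\hat H_+^{-1}\hat u\|_{L^2}$ and the eigenvalues of $\hat H_+(\xi)$ are $\pm\lambda(\xi)$, we have $\|\hat H_+(\xi)^{-1}\| = \lambda(\xi)^{-1}$, so the task reduces to bounding $\int_{[-\pi,\pi]^2} \lambda(\xi)^{-2} |\hat u(\xi)|^2\,d\xi$. The difficulty is that $\lambda(\xi)^{-1}$ blows up like $(s + d(\xi,\{\xi^*_\pm\}))^{-1}$ near the Dirac points $\xi^*_\pm$: by Lemma \ref{lemma: lower_bound_of_lambda}, $\lambda(\xi) \geq \lambda_0 s$ everywhere, and also $\lambda(\xi) \geq |\omega(\xi)| \geq \mu_0\, d(\xi,\{\xi^*_\pm\})$, so combining these,
\begin{equation}
    \lambda(\xi) \geq c_0 \max\bigl(s,\ d(\xi,\{\xi^*_\pm\})\bigr), \qquad c_0 := \min(\lambda_0,\mu_0).
\end{equation}
A crude $L^\infty$ bound $\|H_+^{-1}\| \leq (\lambda_0 s)^{-1}$ loses too much; the support hypothesis must be used to beat $s^{-1}$ down to $s^{-2/3}$.

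The mechanism that exploits the support condition is the uncertainty principle: $u$ lives on the strip $\ZZ\times[-L,L]$, which has only $O(L)$ sites in the $x_2$-direction, and this forces $\hat u$ to be spread out in $\xi_2$, preventing it from concentrating too sharply at the Dirac points. Concretely, I would split the frequency domain into a region near the Dirac points and its complement. On the bulk region $d(\xi,\{\xi^*_\pm\}) \geq \tau$ for a threshold $\tau$ to be optimized, the bound $\lambda(\xi) \geq c_0\tau$ gives
\begin{equation}
    \int_{d(\xi,\{\xi^*_\pm\})\geq \tau} \lambda(\xi)^{-2} |\hat u(\xi)|^2\,d\xi \ \leq \ \frac{1}{c_0^2\tau^2}\|\hat u\|_{L^2}^2 \ = \ \frac{1}{c_0^2\tau^2}\|u\|_{\ell^2}^2.
\end{equation}
On the near region, I would instead use $\lambda(\xi) \geq c_0 s$ together with a bound controlling the $L^2$-mass of $\hat u$ inside a small neighborhood of the two Dirac points.

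The key lemma I need is that for any $u$ supported in the strip, the mass of $\hat u$ over a set of $\xi_2$-measure $\tau$ is $O(L\tau)$ times $\|u\|_{\ell^2}^2$. This is the quantitative uncertainty input: writing $\hat u(\xi) = \sum_{x_2=-L}^{L} e^{-ix_2\xi_2} \hat u_{x_2}(\xi_1)$ as a trigonometric polynomial of degree $\leq L$ in $\xi_2$ (with $\ell^2(\ZZ)$-valued coefficients in $x_1$), one controls its $\xi_2$-integral over a short interval against the full $L^2$-norm with a factor proportional to $L$ times the interval length — a Bernstein/Nikolskii-type estimate for band-limited functions, or more simply the observation that there are only $2L+1$ nonzero Fourier modes in $\xi_2$. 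Integrating over the $O(\tau)$-sized $\xi_2$-neighborhood of $\{\xi^*_\pm\}$ then yields a bound of the form $C L\tau\,\|u\|_{\ell^2}^2$ for the near-region mass, so
\begin{equation}
    \int_{d(\xi,\{\xi^*_\pm\})< \tau} \lambda(\xi)^{-2} |\hat u(\xi)|^2\,d\xi \ \leq \ \frac{1}{c_0^2 s^2}\cdot C L\tau\,\|u\|_{\ell^2}^2.
\end{equation}
Adding the two pieces gives $\|H_+^{-1}u\|_{\ell^2}^2 \leq C\bigl(\tau^{-2} + L\tau s^{-2}\bigr)\|u\|_{\ell^2}^2$, and optimizing over $\tau$ — balancing $\tau^{-2}$ against $L\tau s^{-2}$, i.e. taking $\tau \sim (s^2/L)^{1/3}$ — produces the exponents $L^{1/3}s^{-2/3}$ claimed. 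The main obstacle is making the uncertainty step rigorous with the correct power of $L$: one must be careful that it is the $\xi_2$-extent of the near-region that enters (matching the direction in which $u$ is confined), and that the trigonometric-polynomial degree bound genuinely delivers a factor linear in $L$ rather than something weaker; getting this clean linear-in-$L$ local-mass estimate is the crux, after which the optimization is routine.
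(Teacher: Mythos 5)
Your proposal is correct and follows essentially the same route as the paper's proof: the same Fourier-side reduction and near/far splitting around the Dirac points $\xi^*_\pm$, with the far region handled by $\lambda(\xi)\geq|\omega(\xi)|\geq\mu_0\delta$ and the near region by $\lambda(\xi)\geq\lambda_0 s$ together with Cauchy--Schwarz over the at most $2L+1$ nonzero $\xi_2$-modes of $\hat u$, which yields exactly the linear-in-$L$ local mass bound $O(L\delta)\Vert u\Vert_{\ell^2}^2$ you identify as the crux. The optimization $\delta\sim(s^2/L)^{1/3}$ balancing $\delta^{-2}$ against $L\delta s^{-2}$ is also the paper's final step, so there is nothing missing in your outline.
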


    \begin{proof}[Proof of Lemma \ref{lemma: H_+_inv_lower_bound}] %\alexis{coming from Lemma I suspect the second term $1/\delta$ should be replaced by $1/(\mu_0 \delta)$ term coming from Lemma 4.1 (2)} {\xz I agree} 
    Recall that $\omega^{-1}(0) = \{\xi^*_+, \xi^*_-\}$. Since eigenvalues of $\hat{H}_+^{-1}(\xi)$ are $\pm\frac{1}{\lambda(\xi)}$, for any $\delta>0$, we have
    \begin{equation}
        \label{eq: H+inv}
        \begin{split}
            \Vert H_+^{-1} u\Vert^2_{\ell^2(\ZZ^2)} &= \Vert \hat{H}_+^{-1}(\xi) \hat{u}(\xi)\Vert_{L^2([-\pi, \pi]^2)}^2 \\
            &\leq \int_{[-\pi,\pi]^2} \frac{1}{(\lambda(\xi))^2} |\hat{u}(\xi)|^2 d\xi\\
            &\leq \int_{B_\delta(\xi^*_\pm)} \frac{1}{\lambda_0^2 s^2}|\hat{u}(\xi)|^2 d\xi+ \int_{\left(B_\delta(\xi^*_\pm)\right)^c}\frac{1}{\mu_0^2 \delta^2}|\hat{u}(\xi)|^2 d\xi\\
            &\leq \int_{B_\delta(\xi^*_\pm)} \frac{1}{\lambda_0^2 s^2}|\hat{u}(\xi)|^2 d\xi + \frac{(2\pi)^2}{\mu_0^2\delta^2}\Vert u \Vert_{\ell^2}^2
        \end{split}
      \end{equation}
      thanks to Plancherel's formula $\Vert \hat{u}\Vert_{L^2([-\pi, \pi]^2)} = 2\pi\Vert u\Vert_{\ell^2(\ZZ^2)}$. If $\supp(u) \subset \ZZ \times [-L, L]$, then 
    \[
    \hat{u}(\xi_1, n_2) = \sum\limits_{n_1} e^{-n_1\xi_1} u(n_1,n_2) = 0, \text{~if~} n_2\notin [-L,L].\]
    Thus we have 
    \[
    \begin{split}
        \int_{B_\delta(\xi^*_\pm)} \frac{1}{\lambda_0^2s^2}|\hat{u}(\xi)|^2 d\xi &\leq \frac{1}{\lambda_0^2}\int_{B_\delta(\xi^*_\pm)} \left| \sum\limits_{n_2\in[-L,L]} e^{-i n_2\xi_2} \hat{u}(\xi_1, n_2) \right|^2 d\xi\\
        &\leq \frac{1}{\lambda_0^2s^2} \int_{B_\delta(\xi^*_\pm)} 2L\sum\limits_{n_2\in[-L,L]} |\hat{u}(\xi_1, n_2)|^2  d\xi \\
        &\leq \frac{2L}{\lambda_0^2s^2}\int_{|\xi_2 - (\xi^*_\pm)_2|\leq \delta}\int_{-\pi}^\pi \sum\limits_{n_2}|\hat{u}(\xi_1, n_2)|^2 d\xi_1 d\xi_2\\
        &=\frac{2L\cdot 4\delta}{\lambda_0^2s^2}  \int_{-\pi}^\pi \sum\limits_{n_2} |\hat{u}(\xi_1, n_2)|^2 d\xi_1 = \frac{8\delta L \cdot 2\pi}{\lambda_0^2s^2} \Vert u \Vert_{\ell^2}
  \end{split}
    \]
   where we use the Plancherel's formula on $n_1$-coordinates only for the last line. Combining with the earlier estimates, we get 
    \[
    \Vert H_+^{-1} u \Vert_{\ell^2(\ZZ^2)}^2 \leq \left(\frac{16\pi\delta L}{\lambda_0^2s^2} + \frac{4\pi^2}{\mu_0^2\delta^2}\right)\Vert u \Vert_{\ell^2(\ZZ^2)}^2.
    \]
    In particular, taking $\delta = \left(\frac{\pi\lambda_0^2s^2}{4\mu_0L}\right)^{\frac{1}{3}}$, we get 
    \[
    \Vert H_+^{-1} u \Vert_{\ell^2(\ZZ^2)}^2 \leq C_0^2\left(\frac{L}{s^2}\right)^{\frac{2}{3}}\Vert u\Vert_{\ell^2}^2, \qquad C_0 = 2^{\frac{13}{6}}\pi^{\frac{2}{3}}\mu_0^{-\frac{2}{3}}\lambda_0^{-\frac{2}{3}}.
    \]
This completes the proof.     \end{proof}

\begin{proof}[Proof of Proposition \ref{prop: He_invertible}]

(1) We have \begin{equation}
    \label{eq: He_to_E}
\begin{split}
H_e &= \1_{\Omega}H_+\1_{\Omega} + \1_{\Omega^c} (H_+ - 2S)\1_{\Omega^c} + \1_\Omega H_+ \1_{\Omega^c} + \1_{\Omega^c} H_+ \1_{\Omega}\\
&= \1_{\Omega}H_+\1_{\Omega} + \1_{\Omega^c} H_-\1_{\Omega^c} + \1_\Omega H_+ \1_{\Omega^c} + \1_{\Omega^c} H_+ \1_{\Omega}.  
\end{split}
\end{equation}
By \eqref{eq: He_to_E},  
        \[
        E = H_e - \1_{\Omega}H_+\1_{\Omega} + \1_{\Omega^c} H_-\1_{\Omega^c} = \1_\Omega H_+ \1_{\Omega^c} + \1_{\Omega^c} H_+ \1_{\Omega}.
        \]
    Since $\1_\Omega H_+ \1_{\Omega^c}(x,y) = \1_\Omega(x) H_+(x,y) \1_{\Omega^c}(y) = 0$ if $|x - y| > 2$, we have $E$ satisfies \eqref{eq-1a}; thus, $H_e$ satisfies Assumption \ref{a2}.

    (2)
    Recall that since $\lambda(\xi)$ never vanish, $H_+$ is invertible. Thus
    \[
    H_e = H_+ - 2\1_{\Omega^c} S\1_{\Omega^c} \quad \Leftrightarrow \quad H_+^{-1}H_e = \operatorname{Id} - 2H_+^{-1} \1_{\Omega^c} S\1_{\Omega^c}.
    \]
    To show $H_e$ is invertible, it is enough to show $\Vert 2H_+^{-1} \1_{\Omega^c} S\1_{{\Omega^c}}\Vert <1$.  Since $\|S\| \leq 6s$,
        \begin{equation}
    \begin{split}
        \Vert 2H_+^{-1} \1_{{\Omega^c}}S\1_{\Omega^c} u \Vert_{\ell^2} &\leq C_0 L^{1/3} s^{-2/3}  \Vert \1_{{\Omega^c}}S\1_{\Omega^c} u \Vert_{\ell^2}\\
        &\leq 6C_0 L^{\frac{1}{3}} s^{\frac{1}{3}}\Vert u \Vert_{\ell^2}.
    \end{split}
    \end{equation}
    Thus when $s<\rho_0 L^{-1}$, $\rho_0 = 6^{-3}C_0^{-3}$, we have $\Vert 2H_+^{-1} \1_{{\Omega^c}}S\1_{\Omega^c}\Vert <1$.
    Thus $H_e$ is invertible.\end{proof}

    \begin{remark}
    Numerics actually yield the values
    \begin{equation}
        \lambda_0 = 1, \qquad \mu_0 = \dfrac{3}{\pi \sqrt{26}} \simeq 0.18, \qquad C_0 \simeq 31, \qquad s  < 1.5 \cdot 10^{-7} L^{-1}.
    \end{equation}
    That is, if the second-nearest neighbor hopping is much smaller than the first-nearest neighbor hopping (depending on $L$), then a topological insulator fitting in a strip of width $L$ may not have edge spectrum.
    %{\xz minimum of $\lambda_0$ is right (checked, Simon also helped). minimum of $\mu_0$ may I know how do you get it?} \alexis{I computed $|\omega(\xi)| / d(\xi,\xi_*^\pm)$ for $\xi = (\xi_1,\xi_2) \in \{-3.14,-3.13,-3.12, \dots, 3.14\}^2$. Minimum value was at $(-3.14,-3.14)$.} {\xz I thought you }
    %\alexis{In the end I don't know what you think, $10^{-7}$ is quite small so I don't think it is worth including these numerical values.} {\xz I meant to say I thought in that way (compute at dense enough concrete points), you would get a decimal instead of something in $\pi$ like $\frac{3}{\pi\sqrt{26}}$. \alexis{I just figured out that $(-3.14,-3.14)$ was $(-\pi,-\pi)$ and plugged in the value.} But actually for $\mu_0$ I get almost the same lower bound as you, $\mu_0 \approx \sqrt{0.0350727}\approx 0.187277$ ($0.19$ instead of $0.18$ though \alexis{Yes I rounded them below to give some slack room.}) when $(x,y) = (\pi,\pi)$ or $(-\pi,-\pi)$. $C_0 \approx 29.4$, $s< 1.81814\times 10^{-7}$. Similar level. So $s$ is the hopping energy, i.e. hopping to second-nearest is much smaller comparing to hopping to the nearest sites.} 
\end{remark}

   % {\xz  Previous: 
   %   Pick $\delta = \frac{\mu_0}{2^5 s}$, $s$ small enough such that
   %   \[
   %   \frac{2^4s}{\mu_0 \delta}=\frac{1}{2} \qquad \text{~and~}\qquad  \frac{2^4\sqrt{8\delta L}}{\lambda_0} = \frac{2^{8}\sqrt{sL}}{\lambda_0\sqrt{\mu_0}} <\frac{1}{2}.
   %   \]
   %   Then 
   %  \[
   %  \Vert 2H_+^{-1} \1_{{\Omega^c}}S\1_{\Omega^c}\Vert  \leq \left(\frac{2^4 \sqrt{8\delta L}}{\lambda_0} + \frac{2^4 s}{\mu_0\delta} \right) < \frac{1}{2} + \frac{1}{2} = 1. 
   %  \]
   %  Thus $H_e$ is invertible.
   %  }

\appendix
\section{proof of some estimates}
\label{App}

\begin{proof}[Proof of \eqref{eq-1e}] Fix $a \in (0,1]$. Then:
\begin{equation}
    \sum_{s \in \ZZ} e^{-2a|s|} = \dfrac{2}{1-e^{-2a}}-1 = \dfrac{1+e^{-2a}}{1-e^{-2a}} = \dfrac{1}{\tanh(a)}\leq \frac{2}{a}
\end{equation}
where the last inequality follows from the fact that $\tanh(x)$ is concave when $x>0$; thus $\tanh(a) \geq \tanh(1) a \geq a /2$ for $a \in (0,1]$. This yields the first inequality in \eqref{eq-1e}.   The second inequality follows immediately since $e^{-2a|\bfx|} = e^{-2a|x_1|} e^{-2a|x_2|}$.

Now fix $r \geq 0$. If $|\bfx| \geq r$, then either $|x_1| \geq r/2$ or $|x_2| \geq r/2$. This induces a splitting in two mutually summetric sums:
\begin{equation}
    \sum_{|\bfx| \geq r} e^{-2a|\bfx|} \leq 2 \sum_{|x_1| \geq r/2, \ x_2 \in \ZZ} e^{-2a|\bfx|} \leq 2\left(\dfrac{2}{a}\right)^2 e^{-a r}.
\end{equation}
This completes the proof. \end{proof}

\begin{proof}[Proof of \eqref{eq-1g}] We apply Schur's test. For a selfadjoint operator, it reads:
\begin{align}
    \| H \| \leq \sup_{\bfx \in \ZZ^2} \sum_{\bfy \in \ZZ^2} \big| H(\bfx,\bfy) \big| 
    & \leq \dfrac{1}{\nu} \sup_{\bfx \in \ZZ^2} \sum_{\bfy \in \ZZ^2} e^{-2\nu|\bfx-\bfy|} 
    \\
    & = \dfrac{1}{\nu} \sum_{\bfy \in \ZZ^2} e^{-2\nu|\bfy|} = \dfrac{1}{\nu} \left(\sum_{y_1 \in \ZZ} e^{-2\nu|y_1|}\right)^2 \leq \dfrac{4}{\nu^3}.
\end{align}
In the last inequality, we used the first inequality in \eqref{eq-1e}, which is valid since $\nu \in (0,1]$.    
\end{proof}

\begin{proof}[Proof of \eqref{eq-1h}] By the slope inequality for convex functions $f(x) = e^{xs}$ with any $s \geq 0$, we have for $0 < \alpha \leq \nu$:
\begin{equation}
    \dfrac{e^{\alpha s} -1}{\alpha} \leq \dfrac{e^{\nu s} -1}{\nu}.
\end{equation}
By applying this inequality to $s = |\bfx-\bfy|$, we deduce that $\frac{S_\alpha}{\alpha} \leq \frac{S_\nu}{\nu}$. We now estimate $S_\nu$. We have 
\begin{align}
    S_\nu = \sup_{\bfx\in \ZZ^2} \sum_{\bfy\in \ZZ^2} |H(\bfx, \bfy)|\left(e^{\nu |\bfx - \bfy|} - 1\right)
    &     \leq \dfrac{1}{\nu} \sup_{\bfx\in \ZZ^2} \sum_{\bfy\in \ZZ^2} e^{-2\nu|\bfx-\bfy|}\left(e^{\nu |\bfx - \bfy|} - 1\right)
    \\
    & \leq \dfrac{1}{\nu} \sum_{\bfy\in \ZZ^2} e^{-\nu|\bfy|} \leq \dfrac{16}{\nu^3},
\end{align}
where we used \eqref{eq-1e} again.
\end{proof}

\begin{proof}[Proof of \eqref{eq-1r}] 1. We first note that we have, by $|t - s| + |s| \geq |t|$ and \eqref{eq-1e}:
\begin{equation}\label{eq-1o}
\sum_{s \in \ZZ} e^{-2\beta|t - s| - 2\beta |s|} \leq e^{-\beta|t|} \sum_{s\in \ZZ} e^{-\beta|t - s| - \beta|s|} \leq e^{-\beta|t|}\sum_{s\in \ZZ} e^{- \beta|s|} \leq \frac{4e^{-\beta|t|}}{\beta}.
\end{equation}

2. We now control $S$, the sum in the LHS of \eqref{eq-1r}. To this end, we apply \eqref{eq-1o} four times: first to $(t,s) = (x_1,y_1)$, then $(w_1,z_1)$, then $(y_2,z_2)$ and finally $(x_2,y_2)$. This gives:
\begin{align}
    S & \leq \dfrac{4}{\beta} e^{-\beta|x_1|- 2\beta |w_2|} \sum_{y_2,\bfz} e^{-2\beta |x_2-y_2| - 2\beta |y_2-z_2| -2\beta |z_1| - 2\beta |z_1-w_1| - 2\beta |z_2| }
    \\
    & \leq \left( \dfrac{4}{\beta} \right)^2 e^{-\beta|x_1|-\beta|w_1| - \beta|w_2|} \sum_{y_2,z_2} e^{-2\beta |x_2-y_2| - 2\beta |y_2-z_2| - 2\beta |z_2|}
    \\
    & \leq \left( \dfrac{4}{\beta} \right)^3 e^{-\beta|x_1|-\beta|\bfw|} \sum_{y_2} e^{-2\beta |x_2-y_2| - 2\beta |y_2|} \leq \left( \dfrac{4}{\beta} \right)^4 e^{-\beta|\bfx|-\beta|\bfw|}.
\end{align}
This is \eqref{eq-1r}. 
\end{proof}

\bibliographystyle{amsxport}
\bibliography{ref.bib}
\end{document}